\theoremstyle{plain}
\newtheorem{theorem}{Theorem}
\newtheorem{lemma}[theorem]{Lemma}
\newtheorem{corollary}[theorem]{Corollary}
\theoremstyle{definition}
\newtheorem{definition}[theorem]{Definition}
\newtheorem{remark}[theorem]{Remark}
\newcommand{\D}{{\mathcal D}}
\newcommand{\G}{{\mathcal G}}
\newcommand{\I}{{\mathcal I}}
\newcommand{\M}{{\mathcal M}}
\newcommand{\bci}{\bC^{i}}
\newcommand{\bM}{{\boldsymbol{M}}}
\newcommand{\bF}{{\boldsymbol F}} 
\newcommand{\bH}{{\boldsymbol H}} 
\newcommand{\bS}{{\boldsymbol S}} 
\newcommand{\bT}{{\boldsymbol T}}
\newcommand{\bB}{{\boldsymbol B}}
\newcommand{\bC}{{\boldsymbol C}}
\newcommand{\bE}{{\boldsymbol E}}
\newcommand{\bb}{{\boldsymbol b}}
\newcommand{\bu}{{\boldsymbol u}}
\newcommand{\bv}{{\boldsymbol v}}
\newcommand{\bff}{{\boldsymbol f}}
\newcommand{\by}{{\boldsymbol y}}
\newcommand{\bc}{{\boldsymbol c}}
\newcommand{\bO}{{\boldsymbol 0}}
\newcommand{\bx}{{\boldsymbol{x}}}
\newcommand{\bG}{{\boldsymbol{G}}}
\newcommand{\be}{{\boldsymbol e}}
\newcommand{\ee}{\overline{\boldsymbol{E}}}
\newcommand{\he}{\overline{\boldsymbol{H}}}
\newcommand{\ce}{\boldsymbol{c}^{\boldsymbol{E}}}
\newcommand{\cee}{\boldsymbol{c}^{\overline{\boldsymbol{E}}}}
\newcommand{\cj}{\boldsymbol{c}^{j}}
\newcommand{\ff}{\mathbb{F}}
\newcommand{\fq}{\mathbb{F}_q}
\newcommand{\al}{\alpha}
\newcommand{\bt}{\beta}
\newcommand{\lam}{\lambda}
\newcommand{\mi}{{\sf I}}
\newcommand{\dist}{{\mathsf{d}}}
\newcommand{\cC}{{\mathscr{C}}}
\newcommand{\cce}{\mathscr{C}^{\boldsymbol{E}}}
\newcommand{\ccee}{\mathscr{C}^{\overline{\boldsymbol{E}}}}
\newcommand{\ccj}{\mathscr{C}^{j}}
\newcommand{\define}{\stackrel{\mbox{\tiny $\triangle$}}{=}}
\newcommand{\nin}{\noindent}
\newcommand{\seq}{\subseteq}
\newcommand{\et}{{\emph{et al.}}}
\begin{document}
\pagestyle{plain}

\title{On Block Security of Regenerating Codes at the MBR Point for Distributed Storage Systems}

\author{
   \IEEEauthorblockN{
     Son Hoang Dau\IEEEauthorrefmark{1},
		 Wentu Song\IEEEauthorrefmark{2},
     Chau Yuen\IEEEauthorrefmark{3}}
   \IEEEauthorblockA{
     Singapore University of Technology and Design, Singapore\\     
		Emails: $\{$\IEEEauthorrefmark{1}{\it sonhoang\_dau},
		\IEEEauthorrefmark{2}{\it wentu\_song},		
	   \IEEEauthorrefmark{3}{\it yuenchau}$\}$@sutd.edu.sg
 }
}
\maketitle

\begin{abstract}
%Securing data against adversarial attacks is a highly important task in any distributed storage system. 
A \emph{passive} adversary can eavesdrop stored content or downloaded content of some storage nodes, 
in order to learn illegally about the file stored across a distributed storage system
(DSS). 
Previous work in the literature focuses on code constructions that trade storage capacity for \emph{perfect security}. 
In other words, by decreasing the amount of original data that it can store, the system can guarantee that the adversary, 
which eavesdrops up to a certain number of
storage nodes, obtains \emph{no information} (in Shannon's sense) about the original data.     
In this work we introduce the concept of \emph{block security} for DSS and investigate 
minimum bandwidth regenerating (MBR) codes that are \emph{block secure} against
adversaries of varied eavesdropping strengths. Such MBR codes guarantee that no information about any \emph{group}
of original data units up to a certain size is revealed, \emph{without} sacrificing the storage capacity of the system. 
The size of such secure groups varies according to the number of nodes that the adversary can eavesdrop. We show that code constructions based on Cauchy matrices provide block security. The opposite conclusion is drawn for codes based on Vandermonde matrices. 
\end{abstract}

\section{Introduction}
\label{sec:intro}

\subsection{Background}
\label{subsec:background}

In recent years, the demand for large-scale data storage has grown explosively, due to numerous applications including large files and video sharing, social networks, and back-up systems. Distributed Storage Systems (DSS) store a tremendous amount of data using a massive collection of distributed storage nodes. Applications of DSS include large data centers and P2P storage systems such as OceanStore~\cite{Rhea-et2006}, Total Recall~\cite{Bhagwan-et2004}, and Dhash++~\cite{Dabek-et2004} that deploy a huge number of storage nodes spread widely over the Internet. Since any storage device is individually unreliable and subject to failure, redundancy must be introduced to provide the much-needed system-level protection against data loss due to storage node failure.

The simplest form of redundancy is replication. By storing $n$ identical copies of a file at $n$ distributed nodes, one copy per node, an $n$-duplication system can guarantee the data availability as long as no more than $(n -1)$ nodes fail. Such systems are very easy to implement and maintain, but extremely inefficient in storage space utilization, incurring tremendous waste in equipment, building space, and cost for powering and cooling. More sophisticated systems employing erasure coding based on maximum distance separable (MDS) codes~\cite{WeatherspoonKubiatowicz2001} can expect to considerably improve the storage efficiency. However, a DSS based on MDS codes often incurs considerable communication bandwidth during node repair: a replacement node has to download the whole file to recover the content of just one node. To overcome the disadvantages of both replication and erasure codes, regenerating codes were proposed for DSS in the groundbreaking work of Dimakis {\et}~\cite{DimakisGodfreyWainwrightRamchandran2007}. 
In this work, we only consider regenerating codes for DSS and limit ourselves to 
minimum bandwidth regenerating (MBR) codes with exact repair~\cite{DimakisGodfreyWainwrightRamchandran2007}. 

In practice, one important aspect in the design of a DSS code is its security. 
As the storage nodes can well be located at different places, and new nodes 
join the network all the time to replace failing nodes, it is possible that 
at a certain point, some nodes might be compromised by some unauthorized party, 
referred to as an adversary. The adversary can be either 
\begin{itemize}
\item
\emph{active}, i.e. controlling the node, 
modifying the data, and sending erroneous data to other nodes 
\cite{PawarRouayhebRamchandran2011_1,PawarRouayhebRamchandran2011_2,
RashmiShahKumar2012,DikaliotisDimakisHo2010}, or 
\item
\emph{passive}, 
i.e. knowing the data stored in the node and observing all communication between 
this node and the other nodes, in order to learn illegally the content of the file stored 
by the DSS \cite{PawarRouayhebRamchandran2011_2, PawarRouayhebRamchandran2010,ShahRashmiKumar2011, RawatKoyluogluSilbersteinVishwanath2013, GoparajuRouayhebCalderbankPoor2013}. 
\end{itemize}
In this work, we focus on the security of DSS codes against 
the second type of adversary, the passive adversary. We also refer to this type of adversary 
as eavesdropper. We illustrate in Fig.~\ref{fig:ex1} a scenario where the adversary can observe all data stored in one storage node. 
The data file is split into five data units $\bff = (f_1,f_2,f_3,f_4,f_5)$. 
Originally there are four storage nodes. At a certain time, Node 4 fails. 
Node 5 comes in to replace Node 4 (repair). 
If the content of this node is eavesdropped by Eve, then Eve can obtain
the values of two units $f_4$ and $f_5$ explicitly. 
The same thing happens when Eve eavesdrops any node among the three nodes 1, 2, 
and 3. 
Therefore the system is not secure against an adversary who can eavesdrop
one storage node. 
%%%%%%%%%%%%%%%%%%%%%%%
\begin{figure}
\centering
\includegraphics[scale=0.65]{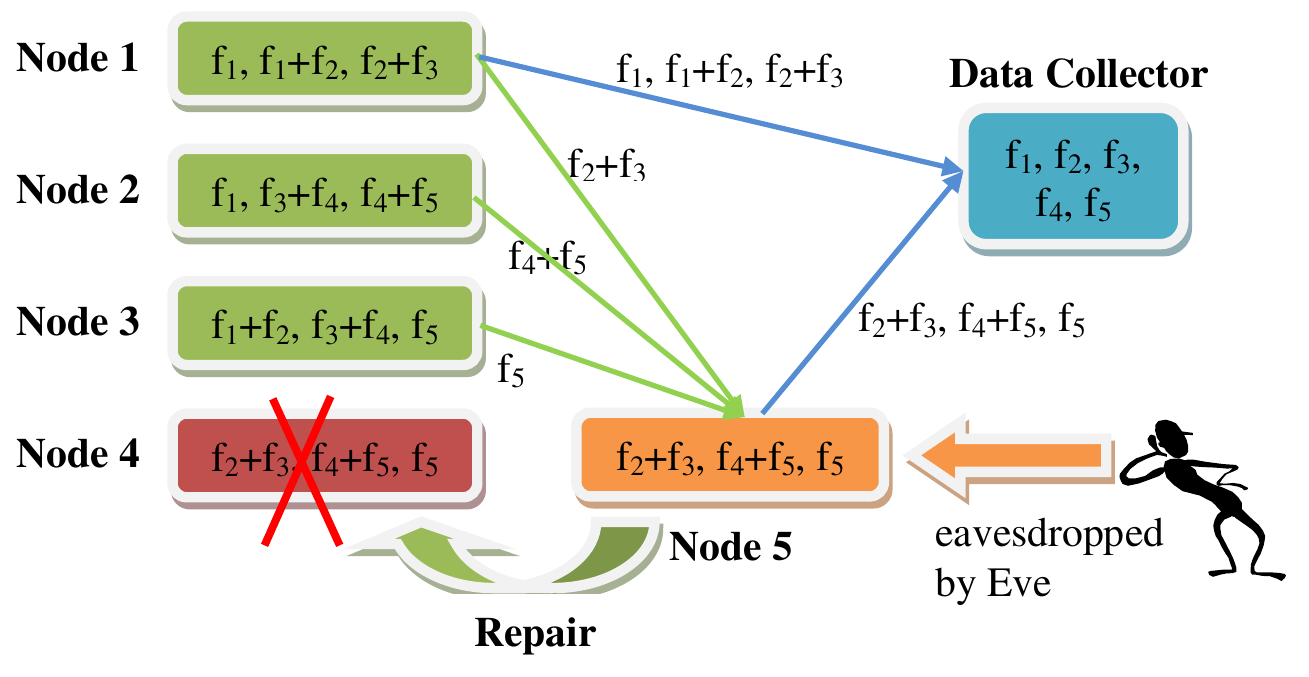}
\caption{Example of a DSS with an eavesdropper.}
\label{fig:ex1}
\vspace{-20pt}
\end{figure}
%%%%%%%%%%%%%%%%%%%%%%%%%%%5
\subsection{Related Work}
\label{subsec:related_work}

Hereafter we follow the standard notation in the regenerating code literature
(see Section~\ref{sec:preliminaries}).  
In their pioneering work, Dimakis {\et}~\cite{DimakisGodfreyWainwrightRamchandran2007}
established that the maximum file size to be stored in a DSS $\D(n,k,d)$ must 
satisfy the following inequality
\begin{equation} 
\label{eq:dimakis}
\vspace{-3pt}
\M \leq \sum_{i = 1}^k \min\{(d-i+1)\bt, \al\}. 
\end{equation} 
A construction of optimal regenerating codes with exact repair at the MBR point for all $n, k, d$ was proposed by Rashmi {\et}~\cite{RashmiShahKumar2011}.   
When the stored contents of some $\ell$ nodes are observed by an adversary Eve, 
Pawar {\et}~\cite{PawarRouayhebRamchandran2011_1,PawarRouayhebRamchandran2011_2}
showed that the maximum file size to be stored satisfies
\begin{equation} 
\label{eq:pawar_bound}
\M^{\text{s}} \leq \sum_{i = \ell+1}^k \min\{(d-i+1)\bt, \al\},
\end{equation} 
provided that Eve gains \emph{no information} (in Shannon's sense) about the file.
This type of security is often referred to as \emph{perfect security} or \emph{strong security} in the literature.  
The parameter $\ell$ is called the \emph{adversarial strength}.   
The authors~\cite{PawarRouayhebRamchandran2011_1,PawarRouayhebRamchandran2011_2} also provided an optimal code construction, based on complete graphs, for the case $d = n-1$ that
attains the bound (\ref{eq:pawar_bound}). We later argue that an extension 
of their construction based on regular graphs~\cite{RouayhebRamchandran2010}
also produces optimal perfectly secure codes for all $d$ with $nd$ even (see Remark~\ref{rm:fractional}). 
Under the same adversary model, Shah {\et}~\cite{ShahRashmiKumar2011}
constructed optimal codes that attain the bound (\ref{eq:pawar_bound}) for all
$d$. The authors used product-matrix codes in their construction.

Comparing (\ref{eq:dimakis}) and (\ref{eq:pawar_bound}), 
it is clear that in order to achieve the perfect security against an adversary
which eavesdrops $\ell$ storage nodes, the storage capacity of the system 
has to be decreased by an amount of
\[
\vspace{-3pt}
\Delta \M = \M - \M^{\text{s}} = \sum_{i = 1}^{\ell} \min\{(d-i+1)\bt, \al\}. 
\]  
Therefore, perfect security is obtained at the cost of lowering the storage capacity. 
Additionally, according to the security scheme studied thereof, one has to specify 
beforehand the strength of the adversary, i.e. the number of storage nodes 
it can eavesdrop, and then modify the input accordingly to obtain the perfect security. 
When the strength of the adversary exceeds the specified threshold, 
nothing is guaranteed on the security of the system.
In many practical storage systems, perfect security is either too strict 
and might not be even necessary, or too costly (too much wasted space in the system) 
and might not be affordable.  
Hence weaker security levels at lower costs would be preferred in many practical scenarios.

\subsection{Our Contribution}
\label{subsec:our_contribution}

In order to address the aforementioned issues with perfect security, 
we propose to use a broader concept of security from the network coding
literature, namely, \emph{block security}~\cite{BhattadNarayanan2005,
SilvaKschischang2009, DauSkachekChee2012} 
(also known as \emph{security against guessing}). 
A code is $b$-block secure against an adversary of strength $\ell$, if the adversary, which accesses at most $\ell$ storage nodes, gains no information about any group of 
$b$ data units. 
In the language of information theory, the mutual information between the eavesdropped content and any group of $b$ original data units is zero. 
The concept of block security describes nicely a hierarchy of different levels of security against eavesdropping. Depending on the adversarial strength, 
a block secure system can provide a range of different levels of security, from the weakest level to the strongest level: 
\begin{itemize}
\item	$1$-block security (also referred to as \emph{weak security)} implies that no individual data unit is revealed, 
\item	$2$-block security implies that no information on any group of two data units is revealed, 
\item	$\cdots$,
\item $\M$-block security, where $\M$ is the total number of data units (the file size), implies the perfect security, i.e. no information about the original data is revealed.  
\end{itemize}
If the system is weakly secure, although the adversary gains some information
about the file, it cannot determine each particular data unit. For instance, if the
file is a movie and the data units are movie chunks, then the adversary obtains no 
information about each individual chunk, and hence cannot play the movie. 
Furthermore, if the system is $b$-block secure against an adversary of strength $\ell$, even when the adversary can access some $\ell$ storage nodes
and on top of that, gain knowledge on some other $b-1$ data units via some side channel, it still cannot determine each particular data unit.  
 
We investigate the security of the two main existing MBR code constructions
\cite{RashmiShahKumarRamchandran2009, RouayhebRamchandran2010,  RashmiShahKumar2011}. Recall that either Vandermonde matrices or 
Cauchy matrices are often used in these constructions. We prove the following 
\begin{itemize}
\item
Vandermonde-based codes are \emph{not} block secure, 
\item 
Cauchy-based regular-graph codes~\cite{RashmiShahKumarRamchandran2009, RouayhebRamchandran2010} are inherently block secure,
\item 
Cauchy-based product-matrix codes~\cite{RashmiShahKumar2011} are inherently block secure. 
\end{itemize}
Based on these results, we are able to conclude that using Cauchy matrices rather than Vandermonde in these constructions of MBR
codes automatically guarantees weaker levels of security compared with 
perfect security, but most importantly, \emph{without} any loss on the storage capacity
of the system. We also show that with Cauchy-based constructions of perfectly secure codes~\cite{PawarRouayhebRamchandran2010, 
PawarRouayhebRamchandran2011_2, ShahRashmiKumar2011}, 
certain security level is still guaranteed even when the adversarial 
strength \emph{surpasses} the security threshold.  
More specifically, in addition to being \emph{perfectly secure} against adversaries of strength not exceeding
a specified threshold $\lambda$, the codes remain to be \emph{block secure} against an
adversary of strength $\ell > \lambda$ as long as $\ell < k$   
(Fig.~\ref{fig:sec_level}). 
The level $b$ of block security gradually decreases as the adversarial strength $\ell$ increases. When $\ell \geq k$, the whole file is revealed. 
%This conclusion addresses the existing issue on the security of
%a perfectly secure code when the adversarial strength exceeds the specified threshold. 
\begin{figure}[H]
\vspace{-5pt}
\centering
\includegraphics[scale=1]{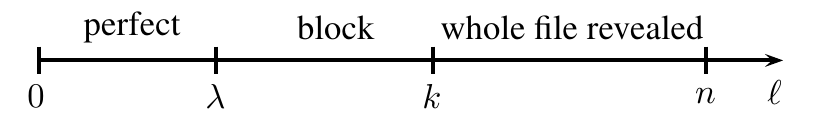}
\caption{Security of a Cauchy-based system as the adversarial strength $\ell$ varies.}
\label{fig:sec_level}
\vspace{-5pt}
\end{figure}

The paper is organized as follows. In Section~\ref{sec:preliminaries}, we discuss the concept of block security for distributed storage systems. We analyze the security of regular-graph codes and product-matrix codes 
in Section~\ref{sec:gcode} and Section~\ref{sec:pcode}, respectively. 
We conclude the paper in Section~\ref{sec:conclusion}. 
%All proofs can be found in the Appendix. 

\section{Block Security for Distributed Storage Systems}
\label{sec:preliminaries}

We denote by $\fq$ the finite field with $q$ elements. 
Let $[n]$ denote the set $\{1,2,\ldots,n\}$. 
Let $\bu^{\text{t}}$ denotes the transpose of a vector $\bu$. 
For an $m \times n$ matrix $\bM$, for $i \in [m]$ and $j \in [n]$, let
$\bM_i$ and $\bM[j]$ denote the row $i$ and the column $j$ of $\bM$, respectively. 
We define below standard notions from coding theory (for instance, see \cite{MW_S}).
 
The (Hamming) \emph{weight} of a vector $\bu \in \fq^n$ is the number
of the nonzero coordinates of $\bu$. 
For the vectors $\bu = (u_1, u_2, \ldots, u_n) \in \fq^n$ and $\bv = (v_1, v_2, \ldots, v_n) \in \fq^n$, 
the (Hamming) distance between $\bu$ and $\bv$ is defined to be the number of coordinates 
where $\bu$ and $\bv$ differ, namely, 
\[
\dist(\bu,\bv) = |\{i \in [n] \; : \; u_i \ne v_i\}|. 
\]
A $k$-dimensional subspace $\cC$ of $\fq^n$ is called a linear $[n,k,d]_q$ code over $\fq$ if the minimum distance of $\cC$, 
\[
\dist(\cC) \define \min_{\bu \in \cC, \; \bv \in \cC, \; \bu \neq \bv} \dist(\bu,\bv) \; , 
\]
is equal to $d$. Sometimes we may use the notation $[n,k]_q$ or just $[n,k]$ for the sake of simplicity. The vectors in $\cC$ are called codewords. It is easy to see that the minimum weight of a nonzero codeword in a linear code $\cC$ is equal to its minimum distance $\dist(\cC)$. A \emph{generator matrix} $\bG$ of an $[n,k]_q$ code $\cC$ is a $k \times n$ matrix whose rows are linearly independent codewords of $\cC$. Then $\cC = \{\by \bG: \by \in \fq^k\}$. 
The well-known Singleton bound states that for any $[n,k,d]_q$
code, it holds that $d \leq n - k + 1$. If the equality is attained, the code is called maximum distance separable (MDS).  

Let $\bF = (F_1, F_2, \ldots, F_\M)$, where $F_i$'s $(i \in [\M])$ are 
independent and identically uniformly distributed random variables over $\fq$.  
We assume that the file to be stored in the system is $\bff = (f_1, f_2, \ldots, f_\M) \in \fq^\M$, a realization of $\bF$.  
We call $\M$ the file size and each $f_i$ a (original) data unit.  

We denote by $\D(n,k,d)$ a typical DSS with $n$ storage nodes
where the file can be recovered from the contents of any $k$ out of $n$ nodes, 
and to repair a failed node, a new node (referred to as a newcomer)
can contact any $d$ nodes to regenerate the content of the failed node. 
We refer to $d$ as the \emph{repair degree}. Additionally, each node stores
$\al$ coded units, i.e. $\al$ linear combinations of data units $f_i$'s. 
In the repair process, a newcomer downloads $\bt$ coded units from each of $d$ 
live nodes.
Suppose that $\bt = 1$ (data striping is used for larger $\bt$). 
At the MBR point, we have~\cite{RashmiShahKumarRamchandran2009,
RashmiShahKumar2011}
\[
\M = kd - \binom{k}{2}, \quad \al = d. 
\]
Following the adversarial attack model proposed by Pawar {\et}~\cite{PawarRouayhebRamchandran2010, PawarRouayhebRamchandran2011_2}, 
we suppose that the adversary can observe the stored contents of some $\ell$ nodes. 
Let $\bci$ denote the random vector over $\fq^{\al}$ 
that represents the stored content at Node $i$. 
%For a subset $E \subseteq [n]$, let $\bce$ denote $\cup_{i \in E} \bci$.  

\vskip 10pt
\begin{definition}[\cite{PawarRouayhebRamchandran2010, PawarRouayhebRamchandran2011_2}]
\label{def:perfect_security}
A DSS $\D(n,k,d)$ together with its coding scheme is called \emph{perfectly secure}
against an adversary of strength $\lam$ $(\lam < k)$ if the mutual information
\[
\mi(\bF,\cup_{i \in E} \bC^i) = 0,
\] 
for all subsets $E \subseteq [n]$, $|E| \leq \lam$. 
\end{definition}

\vskip 10pt
\begin{definition}
\label{def:block_security}
A DSS $\D(n,k,d)$ together with its coding scheme is called $b$-\emph{block secure}
against an adversary of strength $\lam$ $(\lam < k)$ if the mutual information
\[
\mi(\cup_{j \in B} F_j,\cup_{i \in E} \bC^i) = 0,
\] 
for all subsets $B \seq [\M]$, $|B| \leq b$, and for all subsets $E \subseteq [n]$, $|E| \leq \lam$. 
\end{definition}

\begin{figure}[H]
\centering
\includegraphics[scale=0.55]{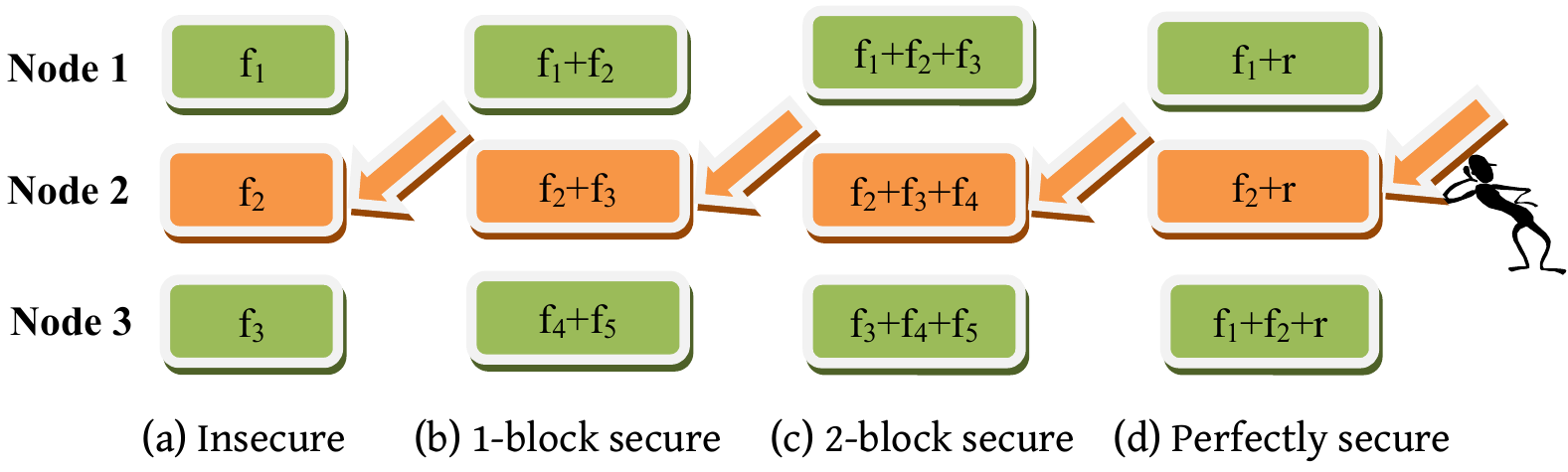}
\caption{Different levels of security.}
\label{fig:levels_security}
\vspace{-10pt}
\end{figure}

We illustrate in Fig.~\ref{fig:levels_security} different levels of security for DSS. 
For instance, in Fig.~\ref{fig:levels_security}(c), as long as the adversary 
accesses the stored content of only one node, it cannot deduce any linear combination
of two units. Therefore, the system is 2-block secure against an adversary of
strength one. In Fig.~\ref{fig:levels_security}(d), by using a randomly generated
variable $r$, as long as the adversary accesses the stored content of only one node, 
it cannot deduce any linear combination of the data units, and hence gains no information at all about the stored file. Hence in this case, the system is perfectly
secure against an adversary of strength one.  

It is straightforward that $\M$-block security is equivalent to perfect security, 
where $\M$ is the file size. Hence, perfect security can well be regarded as the highest
level of block security. However, perfect security is not given for free. 
One has to trade some storage capacity for perfect security (see Section~\ref{subsec:related_work}). 
In fact, in the perfectly secure code constructions presented in~\cite{PawarRouayhebRamchandran2010, PawarRouayhebRamchandran2011_2, 
ShahRashmiKumar2011}, part of the file has to be replaced by randomly 
generated variables, hence reducing the useful storage space of the system.  
By contrast, lower levels of block security can be achieved at essentially no cost, 
as we later present in Section~\ref{sec:gcode} and~\ref{sec:pcode}. 
This advantage of block security makes the concept attractive to practical storage
systems, where the storage redundancy has to be minimized to maintain
a competitive price for the storage service.   

Under the assumptions that 
\begin{itemize}
\item the data units are all independent and identically 
uniformly distributed random variables over $\fq$, 
\item the coding scheme is linear, 
\end{itemize}
the $b$-block security given in Definition~\ref{def:block_security} 
is equivalent to the requirement that no linear combination
of at most $b$ data units can be deduced by the adversary.   
The following lemma specifies a necessary and sufficient condition for
the block security of DSS. %Its proof is straightforward. 

\vskip 10pt 
\begin{lemma} 
\label{lem:basic}
Let $\bff = (f_1,f_2,\ldots,f_\M) \in \fq^{\M}$ be the stored file and $\bE \bff^{\text{t}}$ represent the
coded units that the adversary obtains by observing $\ell$ storage nodes. 
Let $\cC^{\bE}$ be the linear error-correcting code generated by the rows of the matrix $\bE$, 
and $\dist(\cC^{\bE})$ be its minimum distance. 
Then the adversary cannot deduce any nontrivial linear combination of any group of at most
$b$ data units if and only if $b \leq \dist(\cC^{\bE})-1$.  
\end{lemma}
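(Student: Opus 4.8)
The plan is to reformulate the block-security condition as a linear-algebraic statement about the code $\cC^{\bE}$ and then use the standard duality between minimum distance and the combinatorics of column dependencies in a generator matrix. The adversary observes $\bE\bff^{\text{t}}$, and since the coding scheme is linear and the $f_i$ are i.i.d.\ uniform, a linear combination $\sum_{j\in B}\lambda_j f_j = \bw\bff^{\text{t}}$ (with $\supp(\bw)\subseteq B$) is deducible by the adversary precisely when $\bw$ lies in the row space of $\bE$, i.e.\ when $\bw\in\cC^{\bE}$. So ``the adversary can deduce a nontrivial linear combination of some group of at most $b$ data units'' is equivalent to ``$\cC^{\bE}$ contains a nonzero vector of weight at most $b$.''

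**Next I would** translate the weight condition into the minimum-distance language. Recall from the preliminaries that for a linear code, the minimum weight of a nonzero codeword equals $\dist(\cC^{\bE})$. Hence $\cC^{\bE}$ contains a nonzero codeword of weight at most $b$ if and only if $b\ge\dist(\cC^{\bE})$. Negating, the adversary cannot deduce any nontrivial linear combination of any group of at most $b$ data units if and only if $b\le\dist(\cC^{\bE})-1$, which is exactly the claim.

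**The step I expect to require the most care** is justifying rigorously the information-theoretic equivalence between ``$\mi(\cup_{j\in B}F_j,\bE\bF^{\text{t}})=0$ for all $|B|\le b$'' and ``no nonzero $\bw$ with $|\supp(\bw)|\le b$ lies in the row space of $\bE$.'' One direction is immediate: if such a $\bw\in\cC^{\bE}$ exists, the adversary computes $\bw\bF^{\text{t}}$ from its observations, and this is a nonconstant function of $F_j$, $j\in\supp(\bw)$, so the mutual information is positive. For the converse, suppose no such $\bw$ exists; I would argue that for any $B$ with $|B|\le b$, the variables $\{F_j\}_{j\in B}$ remain jointly uniform conditioned on $\bE\bF^{\text{t}}$. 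Concretely, the joint distribution of $(\cup_{j\in B}F_j,\bE\bF^{\text{t}})$ is uniform on its support, and since no nonzero linear functional supported on $B$ is determined by $\bE\bF^{\text{t}}$, the projection onto the $B$-coordinates is surjective and fibre-wise of constant size; this forces independence, hence zero mutual information. I would phrase this via the observation that $\mi(\cup_{j\in B}F_j,\bE\bF^{\text t})=0$ iff the restriction map $\cC^{\bE}\to\fq^{|B|}$ onto the coordinates in $B$ is the zero map, and then quantify over all $B$.

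**Finally I would** assemble these pieces: block security fails at level $b$ exactly when some coordinate subset $B$ of size at most $b$ carries a nonzero codeword of $\cC^{\bE}$, i.e.\ when $\cC^{\bE}$ has minimum weight at most $b$, i.e.\ when $\dist(\cC^{\bE})\le b$; the complementary condition is $b\le\dist(\cC^{\bE})-1$, completing the proof.
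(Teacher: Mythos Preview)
Your argument is correct and matches the paper's approach: both reduce the question to the equivalence ``$\bw\bff^{\text{t}}$ is determined by $\bE\bff^{\text{t}}$ iff $\bw\in\cC^{\bE}$,'' and then invoke that the minimum nonzero weight of $\cC^{\bE}$ equals $\dist(\cC^{\bE})$. The paper carries out the nontrivial direction of that equivalence explicitly (for $\bv\notin\cC^{\bE}$ it exhibits $\overline{\bx}\in\ker\bE$ with $\bv\overline{\bx}^{\text{t}}\neq 0$, so all values of $\bv\tilde\bff^{\text{t}}$ are equiprobable given $\bE\tilde\bff^{\text{t}}$), which is exactly the content of your ``surjective with constant-size fibres'' sketch.

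One slip to fix: your final rephrasing, that $\mi(\cup_{j\in B}F_j,\bE\bF^{\text{t}})=0$ iff ``the restriction map $\cC^{\bE}\to\fq^{|B|}$ onto the coordinates in $B$ is the zero map,'' is not the right condition. The restriction (puncturing) map being zero would say every codeword vanishes on $B$, which is far too strong. What you actually need (and what your earlier paragraphs correctly use) is that no nonzero codeword of $\cC^{\bE}$ has support \emph{contained in} $B$, i.e.\ the \emph{shortened} code on $B$ is trivial; equivalently, $\text{rank}(\bE_{\overline{B}})=\text{rank}(\bE)$. This does not affect the validity of your main argument, only the parenthetical reformulation.
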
  
\vskip 10pt 
A rigorous proof of Lemma~\ref{lem:basic} can be found in the Appendix. We discuss below the 
intuition behind the proof of this lemma. 
As the adversary obtains $\bE \bff^{\text{t}}$, it can linearly transform these coded symbols
by considering the product $\boldsymbol{\alpha}\bE \bff^{\text{t}}$, where $\boldsymbol{\alpha} 
$ is some coefficient vector. Since $\boldsymbol{\alpha}\bE$
is a codeword of $\cC^{\bE}$, its weight is at least $\dist(\cC^{\bE})$ if it is a nonzero codeword. 
In other words, if $\boldsymbol{\alpha}\bE \neq \bO$ then it has at least $\dist(\cC^{\bE})$
nonzero coordinates. As a result, $\boldsymbol{\alpha}\bE \bff^{\text{t}}$ is a linear combination
of at least $\dist(\cC^{\bE})$ data units. 
Therefore, by linearly transforming the eavesdropped coded units $\bE \bff^{\text{t}}$,
the adversary cannot produce a nontrivial linear combination of $\dist(\cC^{\bE})-1$ data units or less. 

On the other hand, as the adversary can choose an appropriate vector $\boldsymbol{\alpha}$
so that $\boldsymbol{\alpha}\bE$ has weight exactly $\dist(\cC^{\bE})$,
it can always determine the value of a linear combination of a certain group of $\dist(\cC^{\bE})$ data units.  
Thus, the adversary cannot deduce any nontrivial linear combination of any group of at most
$b$ data units \emph{only if} $b \leq \dist(\cC^{\bE})-1$.  

\section{On the Security of Regular-Graph Codes}
\label{sec:gcode}

We briefly describe the regular-graph codes constructed by Rashmi {\et}~\cite{RashmiShahKumarRamchandran2009}
and El Rouayheb {\et}~\cite{RouayhebRamchandran2010} below.
 
Let $\G$ be a $d$-regular graph on $n$ vertices $u_1,u_2,\ldots, u_n$. 
Then each vertex of $\G$ is adjacent to $d$ edges. Therefore, let $e_1,e_2,\ldots,e_{nd/2}$
be all $nd/2$ edges of $\G$. Let $\bG$ be an $(nd/2) \times \M$ matrix over $\fq$
satisfying the MDS property: any $\M$ rows of $\bG$ are linearly independent. 
Then $\M$ data units $f_1, f_2,\ldots,f_\M$ are encoded into $nd/2$ coded units by the
transformation $\bc^{\text{t}} = (c_1,\ldots,c_{nd/2})= \bG \bff^{\text{t}}$. 
Node $i$ stores $c_j$ if and only if the edge $e_j$ is adjacent to the vertex $u_i$. 
As $\G$ is $d$-regular, each node stores exactly $\al = d$ coded units $c_j$'s. 
Any set of $k$ nodes together provide $kd - \binom{k}{2} = \M$ distinct coded 
units $c_j$'s, hence can recover the whole file thank to the MDS property of 
the encoding matrix $\bG$. When Node $i$ fails $(i \in [n])$, 
the newcomer contacts Node $j$ if $u_i$ and $u_j$ are adjacent in $\G$. 
Since $\G$ is $d$-regular, there are $d$ such nodes. 
For such a Node $j$, the newcomer downloads $\bt = 1$ coded unit $c_s$
if $e_s$ is the edge connected $u_i$ and $u_j$. 
These $d$ coded units $c_s$ are distinct as they correspond to $d$ distinct edges
adjacent to $u_i$, and form the content of the failed node. 
The newcomer simply stores these $d$ coded units and the repair process for Node $i$ is done.   

\subsection{Cauchy-Based Regular-Graph Codes Are Block Secure}
\label{subsec:cgcode}

A Cauchy matrix %~\cite{Bernstein} 
is a matrix of form $\Big((x_i+y_j)^{-1}\Big)_{m\times n}$, where
$x_i$'s and $y_j$'s are elements of $\fq$ that satisfy $x_i + y_j \neq 0$
for all $i \in [m]$ and $j \in [n]$. A Cauchy matrix has a special property that 
any submatrix is again a Cauchy matrix. 
It is well known that any square Cauchy matrix is invertible. 
Therefore, any square submatrix of a Cauchy matrix is invertible. 
This is a crucial property that makes codes based on Cauchy matrices block secure. 

 \vskip 10pt 
\begin{theorem} 
\label{thm:cauchy_gcode}
A DSS $\D(n,k,d)$ equipped with a regular-graph regenerating code~\cite{RouayhebRamchandran2010} 
based on a Cauchy matrix is $b$-block secure against an adversary of strength $\ell$ $(\ell < k)$, 
where 
\[
b = \Big( kd - \binom{k}{2} \Big) - \Big( \ell d - \binom{\ell}{2} \Big)= (k - \ell)\Big( d - \dfrac{k+\ell-1}{2} \Big). 
\] 
Note that $nd$ must be even according to the code construction. 
\end{theorem}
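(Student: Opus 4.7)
The plan is to combine Lemma~\ref{lem:basic} with the defining property of Cauchy matrices -- every square submatrix is invertible -- to pin down the minimum distance of the observation code $\cC^{\bE}$ exactly.

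I would fix an arbitrary eavesdropping set $E\subseteq[n]$ with $|E|=\ell$ and describe $\bE$ combinatorially. Because node $u_i$ stores $c_j$ precisely when edge $e_j$ is incident to $u_i$, the matrix $\bE$ consists of those rows of $\bG$ indexed by the edges of $\G$ incident to $E$. A handshake count on the $d$-regular graph shows that the number of such distinct edges is $r(E)=\ell d - e_{\text{in}}(E)$, where $e_{\text{in}}(E)\le\binom{\ell}{2}$ is the number of edges of $\G$ with both endpoints in $E$, with equality exactly when $E$ induces a clique in $\G$.

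I would then invoke the Cauchy property. Every square submatrix of a Cauchy matrix is again Cauchy, hence invertible. Two consequences follow. First, any $s\le\M$ rows of $\bG$ are linearly independent, so $\mathrm{rank}(\bE)=r(E)$ as long as $r(E)\le\M$. Second, any $r(E)$ columns of $\bE$ form an invertible $r(E)\times r(E)$ submatrix of $\bG$, so every $r(E)$-subset of coordinates is an information set of $\cC^{\bE}$; equivalently, $\cC^{\bE}$ is an MDS $[\M,\,r(E),\,\M-r(E)+1]_q$ code. By Lemma~\ref{lem:basic} the largest block security level achievable for this $E$ is therefore $b = \M - r(E)$.

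Specialising to $r(E)=\ell d-\binom{\ell}{2}$, which is the value relevant to the construction (corresponding to $E$ inducing an $\ell$-clique, so that the shared edges collapse to the smallest possible number of distinct rows), yields $b=(kd-\binom{k}{2})-(\ell d-\binom{\ell}{2})$; the closed form $(k-\ell)\bigl(d-\frac{k+\ell-1}{2}\bigr)$ is a routine rearrangement. The main obstacle I anticipate is the Cauchy-to-MDS passage: I need to verify carefully that the $r(E)\times r(E)$ submatrices extracted from $\bE$ sit inside $\bG$ as genuine Cauchy submatrices and are therefore invertible, so that the Singleton bound on $d_{\min}(\cC^{\bE})$ is saturated with no slack. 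Once that is in place the handshake count and the final algebraic identity are straightforward.
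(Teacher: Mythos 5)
Your proposal takes the same route as the paper's proof: identify the adversary's observation matrix $\bE$ as a subset of rows of the Cauchy matrix $\bG$, use the fact that every square Cauchy submatrix is invertible to deduce that $\cC^{\bE}$ is MDS with minimum distance $\M - r(E) + 1$, and apply Lemma~\ref{lem:basic}. Your worry about whether the square submatrices of $\bE$ ``sit inside $\bG$ as genuine Cauchy submatrices'' is unfounded: any submatrix of a Cauchy matrix is a Cauchy matrix, so there is no slack in that step.

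The one place you diverge from the paper is instructive, but it is also where your argument is weakest. Your handshake count $r(E) = \ell d - e_{\text{in}}(E)$ with $e_{\text{in}}(E) \le \binom{\ell}{2}$ is correct, but specializing to the clique case --- the \emph{minimum} of $r(E)$ --- and calling it ``the value relevant to the construction'' points the inequality the wrong way. A $b$-block security guarantee must hold for \emph{every} $\ell$-subset $E$, so by Lemma~\ref{lem:basic} you need $b \le \M - \max_E r(E)$, and the maximum of $r(E)$ occurs when $E$ is an \emph{independent} set (so $e_{\text{in}}(E) = 0$ and $r(E) = \ell d$), not a clique. The paper instead asserts directly that eavesdropping $\ell$ nodes always yields exactly $\ell d - \binom{\ell}{2}$ distinct coded units, a count that is uniform over $E$ only when every $\ell$-subset of $\G$ induces a clique, i.e.\ $\G = K_n$ and $d = n - 1$. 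Both your proof and the paper's therefore rest on the same hidden structural assumption about the graph; your explicit count merely makes the dependence visible, while your resolution of it optimizes $r(E)$ in the direction favorable to the claim rather than adversarial to it. To make the step airtight you would either restrict to $d = n-1$, assume the $d$-regular graph has the needed local-clique property, or state the guaranteed level as $b = \M - \max_{|E|=\ell} r(E)$.
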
 
\begin{proof} 
Note that according to the code construction~\cite{RashmiShahKumarRamchandran2009, RouayhebRamchandran2010}, an adversary that accesses $\ell$ nodes
obtains $\ell d - \binom{\ell}{2}$ distinct coded units. 
Therefore, the adversary obtains $\bE \bff^{\text{t}}$, where $\bE$
is a submatrix of the encoding matrix $\bG$, consisting of 
$\ell d - \binom{\ell}{2}$ rows of $\bG$. 
Note that $\bE$ has $kd - \binom{k}{2}$ columns. 
As $\bG$ is a Cauchy matrix, any square submatrix of size $\ell d - \binom{\ell}{2}$ of $\bE$ is also a Cauchy matrix, and hence invertible. 
Therefore, the rows of $\bE$ generates an MDS code of length 
$kd - \binom{k}{2}$ and dimension $\ell d - \binom{\ell}{2}$~\cite[Ch. 11]{MW_S}. It is also well known that such an MDS code has 
minimum distance $(kd - \binom{k}{2}) - (\ell d - \binom{\ell}{2}) + 1$. 
Applying Lemma~\ref{lem:basic}, the proof follows. 
\end{proof} 
\vskip 10pt 

For instance, according to Theorem~\ref{thm:cauchy_gcode}, for $n = 7$, $k = 5$, $d = 6$, the degradation of the block security level of 
a Cauchy-based regular-graph code is illustrated in Fig.~\ref{fig:block_security_level_graph}. 
Note that the file size is $\M = 5 \times 6 - \binom{5}{2} = 20$.

\begin{figure}[htb]
\centering
\includegraphics{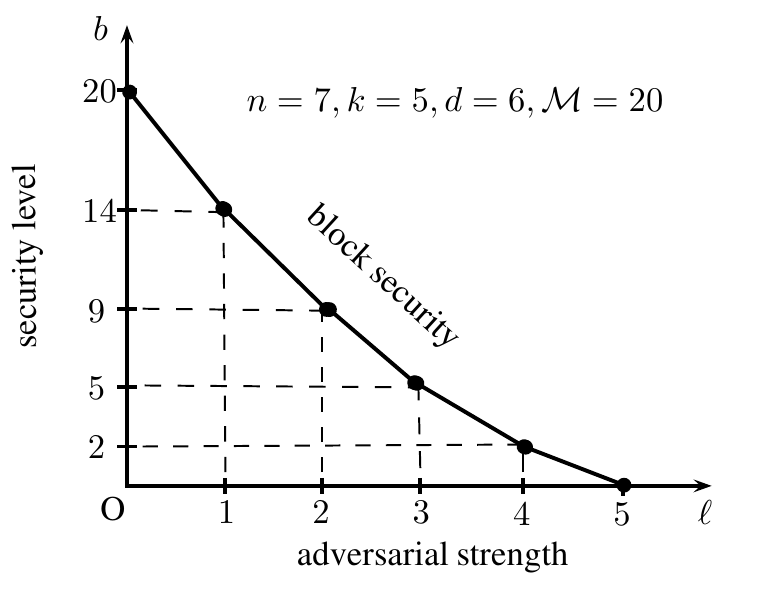}
\caption{Degradation of the security level for $\D(7, 5, 6)$ with a Cauchy-based regular-graph code as the adversarial strength increases.}
\label{fig:block_security_level_graph}
\end{figure}

The corresponding construction of perfectly secure codes~\cite{PawarRouayhebRamchandran2010, PawarRouayhebRamchandran2011_2}
is completely the same as the one described at the beginning of this section, except that 
a subset of data units has to be replaced by a subset of randomly generated variables and
that the encoding matrix $\bG$ has to be either Vandermonde or Cauchy (or more generally, 
a (transposed) generator matrix of a nested MDS code).   
%Note that an adversary which accesses $\ell$ nodes
%can obtain $\ell d - \binom{\ell}{2}$ distinct coded units. Therefore, according
%to the construction in~\cite{PawarRouayhebRamchandran2010, PawarRouayhebRamchandran2011_2}, the first $\ell d - \binom{\ell}{2}$
%data units has to be replaced by random units. 
Suppose that the Cauchy matrix is used in this construction. 
Treating the random variables as data units, we can apply Theorem~\ref{thm:cauchy_gcode} 
and show that even when the adversarial strength surpasses the specified threshold, 
although the code is no longer perfectly secure, it is still block secure.
Hence we have the following corollary. 

\vskip 10pt 
\begin{corollary}
\label{cr:gcode}
Consider a DSS $\D(n,k,d)$ equipped with the regenerating code
proposed in~\cite{PawarRouayhebRamchandran2010, PawarRouayhebRamchandran2011_2, RouayhebRamchandran2010}, 
which is perfectly secure against an adversary of strength at most $\lam$. 
Suppose that in the construction of the regenerating code, a Cauchy matrix is used. 
Then when the adversarial strength $\ell$ $(\ell < k)$ surpasses the threshold $\lam$, 
the code is no longer perfectly secure, but is still $b$-block secure, where 
\[
b = \Big( kd - \binom{k}{2} \Big) - \Big( \ell d - \binom{\ell}{2} \Big) = (k - \ell)\Big( d - \dfrac{k+\ell-1}{2} \Big).
\] 
\end{corollary}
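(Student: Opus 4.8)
The plan is to reduce Corollary~\ref{cr:gcode} directly to Theorem~\ref{thm:cauchy_gcode} by treating the randomly generated variables on the same footing as genuine data units. Recall that the perfectly secure regular-graph construction of~\cite{PawarRouayhebRamchandran2010, PawarRouayhebRamchandran2011_2} is identical to the ordinary regular-graph code of Section~\ref{sec:gcode} except that $\lam$ of the $\M = kd - \binom{k}{2}$ input symbols are independent uniform random keys rather than file symbols, and the encoding matrix $\bG$ is a Cauchy matrix. So I would first observe that, from the point of view of the linear algebra, the input vector is still $\bff = (f_1,\ldots,f_\M) \in \fq^\M$ with all coordinates independent and uniform over $\fq$ --- it is only our \emph{interpretation} of some coordinates as ``useful data'' versus ``keys'' that changes.

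Next I would apply Theorem~\ref{thm:cauchy_gcode} verbatim to this system. An adversary accessing $\ell$ nodes (with $\lam < \ell < k$) still obtains $\bE\bff^{\text{t}}$, where $\bE$ consists of the $\ell d - \binom{\ell}{2}$ distinct rows of the Cauchy matrix $\bG$ indexed by the edges incident to the $\ell$ chosen vertices. By the argument in the proof of Theorem~\ref{thm:cauchy_gcode}, every square submatrix of $\bE$ of size $\ell d - \binom{\ell}{2}$ is a Cauchy matrix, hence invertible, so $\cC^{\bE}$ is an MDS code of length $\M$ and dimension $\ell d - \binom{\ell}{2}$, with minimum distance $d(\cC^{\bE}) = \big(kd - \binom{k}{2}\big) - \big(\ell d - \binom{\ell}{2}\big) + 1$. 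By Lemma~\ref{lem:basic}, the adversary cannot deduce any nontrivial linear combination of any group of at most $b = d(\cC^{\bE}) - 1 = (k-\ell)\big(d - \tfrac{k+\ell-1}{2}\big)$ of the coordinates of $\bff$. In particular this holds for any group of $b$ coordinates that are all genuine data units, which is exactly $b$-block security in the sense of Definition~\ref{def:block_security}, after invoking the stated equivalence (for linear schemes with i.i.d.\ uniform inputs) between block security and the impossibility of recovering a small linear combination.

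Finally I would note the two easy complementary points: the code is \emph{not} perfectly secure once $\ell > \lam$, because the bound~(\ref{eq:pawar_bound}) with adversarial strength $\lam$ is met with equality by the construction, so an adversary of strength $\ell > \lam$ necessarily learns something about $\bF$; and the formula $b = (k-\ell)\big(d - \tfrac{k+\ell-1}{2}\big)$ is just the algebraic simplification of $\big(kd - \binom{k}{2}\big) - \big(\ell d - \binom{\ell}{2}\big)$, identical to the one already recorded in Theorem~\ref{thm:cauchy_gcode}. I do not anticipate a genuine obstacle here: the only subtlety worth spelling out is that replacing data symbols by random keys does not disturb the hypotheses of Lemma~\ref{lem:basic} (independence and uniformity are preserved, and linearity is untouched), so Theorem~\ref{thm:cauchy_gcode} transfers without modification. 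The ``proof'' is therefore essentially one sentence: \emph{treat the keys as data units and apply Theorem~\ref{thm:cauchy_gcode}}.
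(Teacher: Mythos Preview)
Your proposal is correct and follows exactly the paper's approach: the paper's own justification for Corollary~\ref{cr:gcode} is essentially the single sentence ``treating the random variables as data units, we can apply Theorem~\ref{thm:cauchy_gcode}.'' One minor inaccuracy worth fixing: the number of random keys in the construction is $\lam d - \binom{\lam}{2}$, not $\lam$, though this plays no role in your argument.
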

\vskip 5pt 

For example, for a DSS $\D(7, 5, 6)$ as in Fig.~\ref{fig:block_security_level_graph}, 
suppose that the specified security threshold
is $\lambda = 2$, then the degradation of the security is depicted in Fig.~\ref{fig:perfect_block}. 
Note that now the file size has to be decreased to $\M^{\text{s}} = 9$, according to (\ref{eq:pawar_bound}).    

\begin{figure}[htb]
\centering
\includegraphics{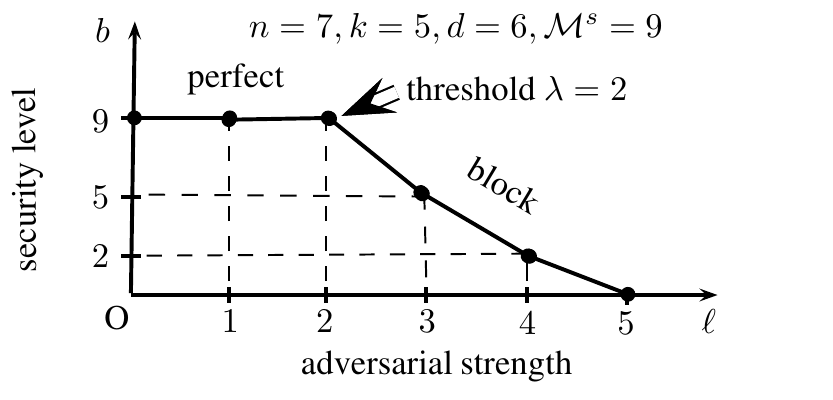}
\caption{Degradation of the security level for $\D(7, 5, 6)$ with a perfectly secure Cauchy-based regular-graph code 
as the adversarial strength increases.}
\label{fig:perfect_block}
\end{figure}
\vspace{-5pt}

\begin{remark}
\label{rm:fractional}
The construction of optimal perfectly secure codes with uncoded exact repair was
proposed by Pawar {\et}~\cite{PawarRouayhebRamchandran2010, PawarRouayhebRamchandran2011_2} for the case $d = n - 1$. 
We argue below that the same construction, i.e. replacing data units by random units and using 
nested MDS codes, based on regular-graphs~\cite{RouayhebRamchandran2010} also provides optimal perfectly secure codes
for all $d$ with $nd$ even. 
That is why we mention both constructions in Corollary~\ref{cr:gcode}. 
Indeed, as $\bt = 1$ and $\al = d$, the bound (\ref{eq:pawar_bound}) reduces to 
\begin{equation} 
\label{eq:reduced_Pawar}
\M^{\text{s}} \leq (k - \ell)\big( d - (k+\ell-1)/2 \big). 
\end{equation} 
According to the construction in~\cite{RouayhebRamchandran2010}, 
eavesdropping on $\ell$ nodes reveals $R = \ell d - \binom{\ell}{2}$ 
distinct coded units. 
Hence, $\ell d - \binom{\ell}{2}$ data units has to be replaced by random units. 
Therefore, the number of data units that can be 
stored securely using a nested MDS code is
\[
\begin{split} 
\M^{\text{s}} &= \M - R\\
&= \Big(kd - \binom{k}{2}\Big) - \Big( \ell d - \binom{\ell}{2} \Big)\\
&= (k - \ell)\big( d - (k+\ell-1)/2 \big), 
\end{split}
\]
which matches the bound (\ref{eq:reduced_Pawar}). Hence the 
regular-graph codes~\cite{RouayhebRamchandran2010} are optimal perfectly
secure code for every $d$ with $nd$ even.  
\end{remark}

\subsection{Vandermonde-Based Regular-Graph Codes Are Not Block Secure}
\label{subsec:vgcode}

We present below an example of a Vandermonde-based regular-graph code~\cite{RashmiShahKumarRamchandran2009, RouayhebRamchandran2010}
that is not block secure and briefly explain the reason behind it.

Let $n = 4$, $k = 2$, and $d = 3$. As $\bt = 1$, we have $\M = kd -\binom{k}{2} = 5$. 
The encoding matrix $\bG$ is chosen as a $6 \times 5$ Vandermonde matrix over $\ff_{13}$
\[
\bG = 
\begin{pmatrix}
1& 1& 1& 1& 1\\
1& 3& 9& 1& 3\\
1& 5& 12& 8& 1\\
1& 7& 10& 5& 9\\
1& 9 & 3 & 1 & 9\\ 
1& 11& 4 & 5& 3\\
\end{pmatrix}
\]  
An adversary which accesses one node obtains $d = 3$ distinct coded units. 
Suppose these three coded units are $\bG_1 \bff^{\text{t}}$, $\bG_2 \bff^{\text{t}}$, and $\bG_5 \bff^{\text{t}}$.
In other words, the adversary obtains $\bE \bff^{\text{t}}$ where $\bE$ is the submatrix of
$\bG$ consisting of the first, the second, and the fifth rows of $\bG$. 
It is straightforward to verify that the minimum distance of the error-correcting code generated
by the rows of $\bE$ is one. Therefore, according to Lemma~\ref{lem:basic}, the code is not even weakly secure ($1$-block secure) against an adversary of strength one. 
More specifically, by applying a linear transformation $\bv \bE \bff^{\text{t}}$ where
$\bv = (9, 1, 3)$, the adversary obtains $f_3 = \bv \bE \bff^{\text{t}}$ explicitly.  

The reason behind this unwanted behavior of Vandermonde-based codes
can be explained as follows. The block security of Cauchy-based regular-graph
codes (see Proof of Theorem~\ref{thm:cauchy_gcode}) strictly relies on a very special property of a Cauchy matrix: every square submatrix
of a Cauchy matrix is invertible. However, a Vandermonde matrix does not have this 
property. For example, the $3 \times 3$ submatrix of the Vandermonde 
matrix $\bG$ above that consisting of the entries in the first, the second, and the fifth
rows, and in the first, second, and the fifth columns of $\G$ only has rank two. 
Therefore, the block security of the corresponding code is no longer guaranteed.  

\section{On the Security of Product-Matrix Codes}
\label{sec:pcode}

We briefly describe the MBR product-matrix codes constructed by Rashmi {\et}~\cite{RashmiShahKumar2011} below.

The file size $\M = kd - \binom{k}{2}$ can be rewritten as $\M = \binom{k+1}{2} - k(d-k)$. Let $\bM$ be the \emph{message} matrix of the following form
\[
\vspace{-3pt}
\bM = \begin{pmatrix}
\bS & \bT\\
\bT^{\text{t}} & \bO
\end{pmatrix},
\vspace{-3pt}
\]
where $\bS$ is a $k\times k$ symmetric matrix and $\bT$ is a $k \times (d-k)$ 
matrix. The $\binom{k+1}{2}$ entries in the upper-triangular half of $\bS$ are
filled up by $\binom{k+1}{2}$ distinct data units drawn from the set $\{f_i\}_{i \in [\M]}$. The remaining $k(d-k)$ data units are used to fill up the second 
$k \times (d-k)$ matrix $\bT$. The encoding matrix is $\Psi = [\Phi \quad \Delta]$, where 
$\Phi$ and $\Delta$ are $n \times k$ and $n \times (d-k)$ matrices, respectively, 
chosen in such a way that
\begin{enumerate}
	\item any $d$ rows of $\Psi$ are linearly independent, 
	\item any $k$ rows of $\Phi$ are linearly independent. 
\end{enumerate}
Then Node $i$ stores $d$ entries of row $i$ of the matrix $\Psi \bM$. 
The encoding matrix $\Psi$ is often chosen as a Vandermonde matrix 
or a Cauchy matrix. 
Details on the file reconstruction and node repair processes can be found 
in~\cite{RashmiShahKumar2011}.   

\subsection{Cauchy-Based Product-Matrix Codes Are Block Secure}
\label{subsec:cpcode}

Our main result in this section is to prove that the Cauchy-based product-matrix
code~\cite{RashmiShahKumar2011} is $(k-\ell)$-block secure against an adversary 
of strength at most $\ell < k$. Note that if $k$ nodes are eavesdropped, the 
whole file will be reconstructed. Thus the block security level of the Cauchy-based
product-matrix codes is $d - (k+\ell-1)/2$ times lower than that of the Cauchy-based 
regular graph-codes (see Theorem~\ref{thm:cauchy_gcode}). 
%%%%%%%%%%%%%%%%
\vskip 5pt 
\begin{theorem} 
\label{thm:cauchy_pcode}
In the construction of an MBR product-matrix code~\cite{RashmiShahKumar2011} for a DSS $\D(n,k,d)$, if the encoding matrix $\Psi$ is a Cauchy matrix then 
the code is $(k-\ell)$-block secure against an adversary of strength $\ell$ $(\ell < k)$. 
\end{theorem}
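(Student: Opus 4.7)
My plan is to apply Lemma~\ref{lem:basic} and show that the linear code $\cC^{\bE}$ generated by the rows of the $\ell d \times \M$ matrix $\bE$ whose rows are the data-unit coefficient vectors of the observed symbols has minimum distance at least $k - \ell + 1$. An adversary on $E \subseteq [n]$ with $|E| = \ell$ observes the $\ell d$ entries of the matrix $\Psi_E \bM$, so every $\fq$-linear combination of observed symbols can be written as $\operatorname{tr}(\bM \bB)$ with $\bB = \bN^{\text{t}} \Psi_E$ for some $\bN \in \fq^{\ell \times d}$, and such a $\bB$ has rank at most $\ell$.

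Block-partition $\bB$ conformally with $\bM$ as $\bB = \begin{pmatrix} \bB_{11} & \bB_{12} \\ \bB_{21} & \bB_{22} \end{pmatrix}$. A short calculation, exploiting the symmetry of $\bS$ and the zero block $\bO$ of $\bM$, shows that the coefficient of the data unit $s_{ij}$ (for $i \le j$) in $\operatorname{tr}(\bM \bB)$ equals $(\bB_{11})_{ii}$ if $i = j$ and $(\bB_{11})_{ij} + (\bB_{11})_{ji}$ if $i < j$, while the coefficient of $t_{ij}$ equals $(\bB_{12})_{ij} + (\bB_{21})_{ji}$; the block $\bB_{22}$ does not contribute. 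In particular, the resulting codeword of $\cC^{\bE}$ is the zero vector if and only if $\bB_{11}$ is skew-symmetric with zero diagonal and $\bB_{12} + \bB_{21}^{\text{t}} = \bO$.

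The core of the proof is the following algebraic statement: if $\bB = \bN^{\text{t}} \Psi_E$ and the resulting codeword has weight at most $k - \ell$, then $\bB_{11}$ is skew-symmetric with zero diagonal and $\bB_{12} + \bB_{21}^{\text{t}} = \bO$. Splitting $\bN = (\bN_1 \; \bN_2)$ with $\bN_1 \in \fq^{\ell \times k}$ and $\bN_2 \in \fq^{\ell \times (d-k)}$ gives $\bB_{11} = \bN_1^{\text{t}} \Phi_E$, $\bB_{12} = \bN_1^{\text{t}} \Delta_E$, $\bB_{21} = \bN_2^{\text{t}} \Phi_E$. Since $\Psi = [\Phi \; \Delta]$ is Cauchy, every $\ell$-column submatrix of $\Phi_E$ (resp.\ of $\Delta_E$) is an $\ell \times \ell$ Cauchy submatrix and therefore invertible. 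My strategy is to locate $\ell$ ``untouched'' pivot columns (of $[k]$ and of $[d-k]$) on which the coefficient vector vanishes, use the corresponding invertible Cauchy submatrices to eliminate the $\bN_1, \bN_2$ variables, and then propagate the forced zeros to all columns.

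The main obstacle is that a single off-diagonal nonzero entry $(\bB_{11})_{ij} + (\bB_{11})_{ji}$ involves two distinct indices $i, j \in [k]$, so the set of ``touched'' indices could have size up to $2(k-\ell)$, exceeding $k - \ell$ when $\ell < k/2$ and leaving fewer than $\ell$ untouched indices for direct pivoting. Overcoming this gap should require either a more delicate choice of pivot columns that balances row- and column-support separately (using the symmetry of the $\bS$-block matrix together with the rank-$\le \ell$ structure of $\bB_{11} = \bN_1^{\text{t}} \Phi_E$), or an iterative peeling argument on the low-rank factorization that bootstraps from successive pivot choices. I expect this reconciliation of combinatorial support with the algebraic rank constraint to be the technical heart of the argument.
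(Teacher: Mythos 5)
Your setup is sound and in fact reproduces the paper's own reduction in a compact form: the trace expression $\operatorname{tr}(\bM\bB)$ with $\bB=\bN^{\text{t}}\Psi_E$, together with your block computation of the coefficient of each data unit (diagonal $(\bB_{11})_{ii}$, off-diagonal $(\bB_{11})_{ij}+(\bB_{11})_{ji}$, and $(\bB_{12})_{ij}+(\bB_{21})_{ji}$, with $\bB_{22}$ irrelevant), is exactly the information encoded in the paper's $(d\ell)\times\M$ matrix $\ee$ of Lemma~\ref{lem:1}. So the correct target statement---any codeword of this code with weight at most $k-\ell$ is zero, equivalently Lemma~\ref{lem:2}---is where you land, and that is right.

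The gap you flag at the end is a genuine gap, and the ``untouched pivot columns'' strategy as stated would indeed fail. However, the paper does not close it by the global pivoting you describe; the key move is that one does \emph{not} need $\ell$ indices of $[k]$ that are simultaneously untouched by all nonzero coordinates. Instead, one argues one target coordinate at a time: fix a set $U$ of $k-\ell$ index pairs (e.g.\ the support), and for each $(s,t)\in U$ separately choose an $\ell$-subset $L\subset[k]$ of pivot columns depending on $(s,t)$. The crucial counting observation, which your ``$2(k-\ell)$ touched indices'' estimate misses, is that \emph{each element of $U$ contributes at most one bad column index to avoid, not two}. This is because the coordinates are indexed by $\I=\{(i,j):1\le i\le k,\ i\le j\le d\}$ with $i\le j$, and the elimination identity (the paper's~(\ref{eq:ckk}) or~(\ref{eq:ckm})) only involves $c_{(i,j)}$ with $i\le j$ and $i,j\in L$; so to avoid a pair $(p,q)\in U$ with $p\le q$ appearing on the right-hand side it suffices that one of $p,q$ is outside $L$. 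Consequently the ``bad set'' $L^*$ has size at most $|U|=k-\ell$ (indeed strictly smaller once you account for repeats), so $[k]\setminus L^*$ always has at least $\ell$ elements and a valid $L$ exists. The paper then needs three case lemmas (Lemma~\ref{lem:case1} for diagonal $(t,t)$, Lemma~\ref{lem:case2} for $(s,t)$ with $s\neq t\le k$ with a mild recursion back into Lemma~\ref{lem:case1}, and Lemma~\ref{lem:case3} for $t>k$) plus an explicit algebraic verification that the proposed linear combinations, with coefficients $a_{i,t}(L)$ expressing columns of the Cauchy matrix $\bE$ in terms of $\ell$ chosen columns, reproduce the target coordinate. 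To turn your proposal into a proof you should replace the global ``$\ell$ untouched columns'' search with this per-coordinate, per-$U$-element pivot choice, exploiting the $i\le j$ indexing to get the one-index-per-element bound, and then prove the corresponding elimination identities.
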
   
\begin{proof}[Sketch] 
To facilitate our discussion, we label the data units by the index set 
\begin{equation}
\label{eq:I}
\I = \{(i,j) \mid 1 \leq i \leq k \text{ and } i \leq j \leq d\}. 
\end{equation}
We assume that the $\M$ elements of $\I$ are always listed in the lexicographic order: $(1,1)$, $(1,2)$, $\ldots$, $(1,d)$, $(2,2)$, $(2,3)$, $\ldots$, $(2,d)$, $\ldots$,
$(k,k)$, $(k,k+1)$, $\ldots$, $(k,d)$. For $\xi = (i,j) \in \I$, we often write 
$f_\xi$ or $f_{(i,j)}$ interchangeably. Also, for $(i,j) \in \I$, both the $(i,j)$-entry
and the $(j,i)$-entry of $\bM$ are $f_{(i,j)}$. Again,
\vspace{-3pt}
\[
\bff = (f_{(1,1)},\ldots, f_{(1,d)}, f_{(2,2)}, \ldots, f_{(2,d)},\ldots, f_{(k,k)}, \ldots, 
f_{(k,d)})
\] 
is the vector in $\fq^\M$ that represents the file stored in the system.  
Suppose the encoding matrix $\Psi$ is a Cauchy matrix.
We assume that the adversary can access some $\ell$ storage nodes 
and $\bE$ is the submatrix of $\Psi$ consisting of the corresponding 
$\ell$ rows of $\Psi$. Hence the adversary obtains 
\vspace{-3pt}
\begin{equation} 
\label{eq:adv1}
\bH = \bE \bM.
\end{equation}  
This is regarded as a collection of $d$ linear systems with unknowns $f_\xi$'s, $\xi \in \I$. 
In order to apply Lemma~\ref{lem:basic}, we have to transform these systems
into a single linear system with unknown $\bff$. Let $\ee = (\bb_{j,\xi})_{j \in [d], \xi \in \I}$
be a $(d\ell) \times \M$ block matrix where
%%%%%%%%%%%%%%%%%
\begin{equation} 
\label{eq:ee}
\bb_{j,\xi} = 
\begin{cases}
\bE[i], &\text{if } \xi = (i,j) \text{ or } \xi = (j,i),\\
\bO, &\text{otherwise},
\end{cases}
\end{equation} 
%%%%%%%%%%%%%%%%%
and $\bO$ denotes the all-zero vector in $\fq^{\ell}$. 
Recall that $\bE[i]$ denotes the column $i$ of $\bE$. 
Let $\he = (\bH[1]^{\text{t}}, \bH[2]^{\text{t}}, \ldots, \bH[d]^{\text{t}})^{\text{t}}$. %The proofs of the following two lemmas can be found in the full version of this paper~\cite{full-version}. 
The proofs of the following two lemmas can be found in the Appendix. 
%%%%%%%%%%%%%%%%%%%%%%%%
\begin{lemma} 
\label{lem:1}
The systems (\ref{eq:adv1}) are equivalent to the following system
\begin{equation} 
\label{eq:adv2}
\he = \ee \bff^{\text{t}}. 
\end{equation}
\end{lemma} 
%%%%%%%%%%%%%%%%%%
\begin{lemma} 
\label{lem:2}
The linear code generated by the rows of the matrix $\ee$ has minimum distance $k-\ell + 1$. 
\end{lemma} 
%%%%%%%%%%%%%%%%%%%%%%%%
Combining these two lemmas and Lemma~\ref{lem:basic} (applied to $\ee$ instead of $\bE$), 
we deduce that the code is $(k-\ell)$-block secure against adversaries of strength $\ell$ $(\ell < k)$. 
\end{proof} 
\vskip 10pt 

For instance, according to Theorem~\ref{thm:cauchy_pcode}, for $n = 7$, $k = 5$, $d = 6$, 
the degradation of the block security level of a Cauchy-based product-matrix code is illustrated in Fig.~\ref{fig:block_security_level_pm}. 
Note that the file size is $\M = 5 \times 6 - \binom{5}{2} = 20$.

\begin{figure}[htb]
\centering
\includegraphics{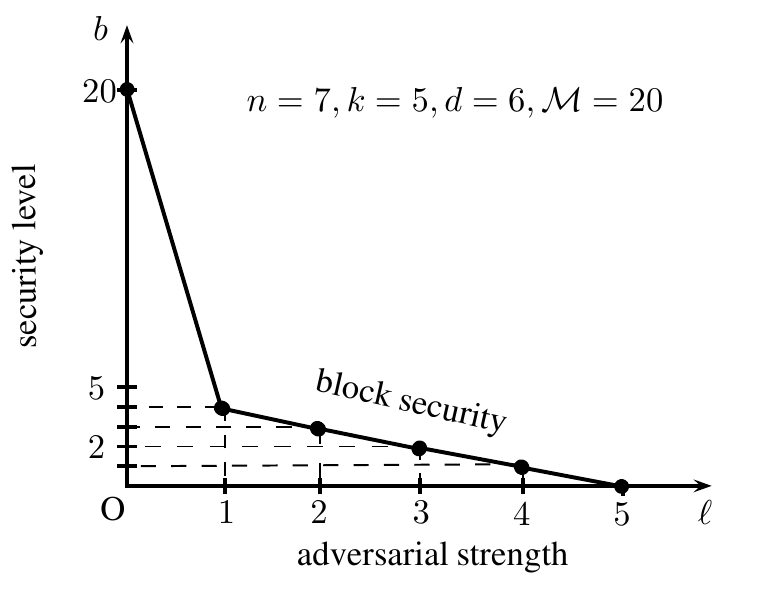}
\caption{Degradation of the security level for $\D(7, 5, 6)$ with a Cauchy-based product-matrix code as the adversarial strength increases.}
\label{fig:block_security_level_pm}
\end{figure}

Similar to Corollary~\ref{cr:gcode}, we can establish that if Cauchy matrices
are used in the construction of perfectly secure product-matrix codes~\cite{ShahRashmiKumar2011}, 
the codes remain $(k-\ell)$-block secure even when 
the adversarial strength $\ell$ surpasses the specified threshold for
perfect security $(\ell < k)$.  
For example, for a DSS $\D(7, 5, 6)$ as in Fig.~\ref{fig:block_security_level_pm}, 
suppose that the specified security threshold
is $\lambda = 2$, then the degradation of the security is depicted in Fig.~\ref{fig:perfect_block_pm}. 
Note that now the file size has to be decreased to $\M^{\text{s}} = 9$, according to (\ref{eq:pawar_bound}).  

\begin{figure}[htb]
\centering
\includegraphics{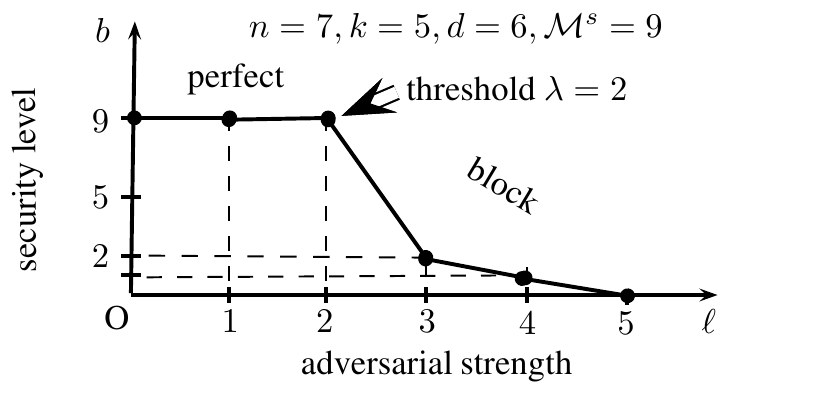}
\caption{Degradation of the security level for $\D(7, 5, 6)$ with a perfectly secure Cauchy-based product-matrix code 
as the adversarial strength increases.}
\label{fig:perfect_block_pm}
\end{figure}

\subsection{Vandermonde-Based Product-Matrix Codes Are Not Block Secure}
\label{subsec:vpcode}
 
The Vandermonde-based product-matrix codes are not $1$-block (weakly) secure
against an adversary of strength $k-1$. Indeed, consider the example 
from the original paper on product-matrix codes in Fig.~1, Section IV~\cite{RashmiShahKumar2011}. 
In that example, $k = 3$. By observing two nodes, Node $1$ and Node $6$, the adversary obtains two linear combinations
$f_7 + f_8 + f_9$ and $f_7+6f_8+f_9$. From these two, the adversary 
can deduce $f_8$. Hence, the code used in this example, which is 
based on a Vandermonde matrix, is not weakly secure against an adversary
of strength two. The Cauchy-based code, however, is weakly secure against an adversary
of strength two. 

\section{Conclusion}
\label{sec:conclusion}

We have evaluated the security levels of the two well-known MBR 
regenerating codes with exact repair, namely, regular-graph codes 
and product-matrix codes, employing a more general concept of security - block security. Block security provides weaker levels of security compared with perfect security, however, does not require a trade-off with the storage capacity. We established that Cauchy matrices play a vital role in 
guaranteeing block security for these MBR codes. Examining the security levels of known MSR regenerating codes is an interesting direction for future work.  

%\section{Acknowledgment}
%This research is partly supported by the International Design Center 
%(grant no. IDG31100102 \& IDD11100101).
\bibliographystyle{IEEEtran}
\bibliography{Block-Secure-DSS}

\section*{Appendix}

\subsection{Proof of Lemma~\ref{lem:basic}}

The \emph{only if} direction is proved earlier in the discussion below Lemma~\ref{lem:basic}. 
It remains to prove the \emph{if} direction. 

Let $\tilde{\bff}$ be the guessed value (by the adversary) for the real file vector $\bff$. 
By eavesdropping, the adversary obtains some linear combinations of the coordinates
of $\tilde{\bff}$ and $\bff$, namely $\bu = \bE \tilde{\bff}^\text{t} = \bE \bff^\text{t}$.
We aim to show that for every nontrivial vector $\bv \in \fq^\M$ of weight less than $\dist(\cce)$, for the 
adversary, all possible values for the linear combination $\bv \tilde{\bff}^\text{t}$ are equally probable. 
As a consequence, the adversary gains no information about $\bv \bff^{\text{t}}$. 

Let $\bv \in \fq^\M$ be any nontrivial vector of weight less than $\dist(\cce)$ and $\tilde{s}$ an arbitrary element of $\fq$. 
It suffices to show that the system of linear equations
\begin{equation} 
\begin{cases}
\bE \tilde{\bff}^\text{t} = \bu&\\
\bv \tilde{\bff}^\text{t} = \tilde{s}& 
\end{cases}.
\label{eq:n1}
\end{equation} 
always has the same number of solutions $\tilde{\bff}$ for every choice of $\tilde{s} \in \fq$. 
It is a basic fact from linear algebra that the solution set for the system (\ref{eq:n1}) above, if nonempty, 
is an affine space, which is the sum of one solution of (\ref{eq:n1}) and the solution space of the corresponding 
homogeneous system. Therefore, if the system (\ref{eq:n1}) always has at least one solution for every
$\tilde{s}$, then it would have the same number of solutions for every $\tilde{s}$.  
Therefore, it remains to prove that this system always has a solution for every choice of $\tilde{s} \in \fq$. 

Indeed, let $s = \bv \bff^\text{t}$. Note that $s$ can be different from $\tilde{s}$.
We have
\begin{equation} 
\begin{cases}
\bE \bff^\text{t} = \bu&\\
\bv \bff^\text{t} = s& 
\end{cases}.
\label{eq:n2}
\end{equation} 
By subtracting the corresponding equations in the
two systems (\ref{eq:n1}) and (\ref{eq:n2}) and let $\bx = (\tilde{\bff}-\bff)$
be the new unknowns, we obtain the following system 
\begin{equation} 
\begin{cases}
\bE \bx^\text{t} = \bO&\\
\bv \bx^\text{t} = \tilde{s} - s& 
\end{cases}.
\label{eq:n3}
\end{equation}
It is clear that the system (\ref{eq:n1}) has a solution if and only if the system (\ref{eq:n3}) has 
a solution. Hence, it suffices to show that the system (\ref{eq:n3}) always has a solution for every
choice of $\tilde{s} \in \fq$. 

We claim that there exists some $\overline{\bx} \in \fq^\M$ satisfying $\bE  \overline{\bx}^\text{t} = \bO$
and $\bv \overline{\bx}^\text{t} \neq 0$. Then it is obvious that 
\[
\bx^* = \frac{\tilde{s} - s}{\bv \overline{\bx}^\text{t}}\overline{\bx}
\]
would be a solution of (\ref{eq:n3}), and hence the proof would follow. 
Indeed, if $\bv \bx^\text{t} = 0$ for every $\bx$ satisfying $\bE \bx^\text{t} = \bO$
then $\bv$ must belong to the orthogonal complement of the solution space of 
the system $\bE \bx^\text{t} = \bO$, which is precisely the row space of $\bE$. 
However,  the row space of $\bE$ contains no nontrivial vector of
weight less than $\dist(\cce)$. This would cause a contradiction since we assume 
from the beginning that $\bv$ is nontrivial and has weight less than $\dist(\cce)$. 
Thus, there must exist some $\overline{\bx} \in \fq^\M$ satisfying $\bE  \overline{\bx}^\text{t} = \bO$
and $\bv \overline{\bx}^\text{t} \neq 0$, as claimed above.

\subsection{Proof of Lemma~\ref{lem:1}}

We can rewrite the systems (\ref{eq:adv1}) as 
\[
\bH[j] = \bE (\bM[j]), \quad j \in [d]. 
\]
For each $j \in [d]$, the above system is equivalent to
\begin{equation}
\label{eq:basic}
\bH[j] = [\bb_{j, \xi_1} \mid \bb_{j,\xi_2} \mid \cdots \mid \bb_{j,\xi_\M}] \bff,
\end{equation}
where $\bb_{j,\xi}$ is given by (\ref{eq:ee}), $\{\xi_1,\xi_2,\ldots, \xi_\M\}$ forms the index set $\I$
defined in (\ref{eq:I}) in the lexicographic order, and $\bff = (f_{\xi_1}, f_{\xi_2}, \ldots, f_{\xi_\M})$ is the stored file. 
By vertically concatenating the systems (\ref{eq:basic}) for all $j \in [d]$, we obtain (\ref{eq:adv2}). 

As an example, let $k=3$ and $d=5$. Then $\M=12$ and the index set $\I$
is
\[
\begin{split}
\I &=\{\xi_1,\xi_2,\ldots, \xi_{12}\}\\
&=\big\{(1,1),(1,2),(1,3),(1,4),(1,5),(2,2),(2,3),(2,4),\\
&\qquad (2,5),(3,3),(3,4),(3,5)\big\}.
\end{split}
\]
The the message matrix is
\[
\bM = \begin{pmatrix}
f_{(1,1)} & f_{(1,2)}  & f_{(1,3)}  & f_{(1,4)}  & f_{(1,5)}\\
f_{(1,2)} & f_{(2,2)}  & f_{(2,3)}  & f_{(2,4)}  & f_{(2,5)}\\
f_{(1,3)} & f_{(2,3)}  & f_{(3,3)}  & f_{(3,4)}  & f_{(3,5)}\\
f_{(1,4)} & f_{(2,4)}  & f_{(3,4)}  & 0  & 0\\
f_{(1,5)} & f_{(2,5)}  & f_{(3,5)}  & 0  & 0\\
\end{pmatrix}.
\]
The file is 
\[
\begin{split}
\bff = (f_{(1,1)},f_{(1,2)},f_{(1,3)},f_{(1,4)},&f_{(1,5)},f_{(2,2)},f_{(2,3)},f_{(2,4)},\\ 
&f_{(2,5)},f_{(3,3)},f_{(3,4)},f_{(3,5)}).
\end{split}
\]
\setcounter{MaxMatrixCols}{15}
%The matrix $\ee$ is (to save space we replace $\bE[i]$ by $\be_i$)
%\[
%\ee = 
%\begin{pmatrix}
%\bE[1] & \bE[2] & \bE[3] & \bE[4] & \bE[5] & \bO & \bO & \bO & \bO & \bO & \bO & \bO\\
%\bO & \bE[1] & \bO & \bO & \bO & \bE[2] & \bE[3] & \bE[4] & \bE[5] & \bO & \bO & \bO\\
%\bO & \bO & \bE[1] & \bO & \bO & \bO & \bE[2] & \bO & \bO & \bE[3] & \bE[4] & \bE[5]\\
%\bO & \bO & \bO & \bE[1] & \bO & \bO & \bO & \bE[2] & \bO & \bO & \bE[3] & \bO\\
%\bO & \bO & \bO & \bO & \bE[1] & \bO & \bO & \bO & \bE[2] & \bO & \bO & \bE[3]
%\end{pmatrix}.
%\]
The matrix $\ee$ is (to save space we replace $\bE[i]$ by $\be_i$, columns are
indexed by $\xi_1,\xi_2,\ldots, \xi_{12}$)
{\footnotesize
\[
\ee = 
\bordermatrix{
& \xi_1&\xi_{2}&\xi_{3}&\xi_{4}&\xi_{5}&\xi_{6}&\xi_{7}&\xi_{8}&\xi_{9}&\xi_{10}&\xi_{11}&\xi_{12}\cr
&\be_1 & \be_2 & \be_3 & \be_4 & \be_5 & \bO & \bO & \bO & \bO & \bO & \bO & \bO\cr
&\bO & \be_1 & \bO & \bO & \bO & \be_2 & \be_3 & \be_4 & \be_5 & \bO & \bO & \bO\cr
&\bO & \bO & \be_1 & \bO & \bO & \bO & \be_2 & \bO & \bO & \be_3 & \be_4 & \be_5\cr
&\bO & \bO & \bO & \be_1 & \bO & \bO & \bO & \be_2 & \bO & \bO & \be_3 & \bO\cr
&\bO & \bO & \bO & \bO & \be_1 & \bO & \bO & \bO & \be_2 & \bO & \bO & \be_3
}.
\]
}

\subsection{Proof of Lemma~\ref{lem:2}}
Let $\ccee$ be the error-correcting code generated by the rows of $\ee$. 
We aim to prove that for every codeword of $\ccee$, each group of $k-\ell$ coordinates
can be represented as linear combinations of the other $\M - (k-\ell)$ coordinates, which do not belong to the group.
As a consequence, if a codeword $\cee \in \ccee$ has weight at most
$k - \ell$, i.e. it has some $\M - (k-\ell)$ zero coordinates, then 
its remaining $k-\ell$ coordinate must also be zero, and hence $\cee = \bO$. 
Therefore, every nonzero codeword in $\ccee$ has weight at least $k - \ell + 1$. 
Hence $\dist(\ccee) \geq k - \ell + 1$. This constitutes the most challenging
part in the proof of Lemma~\ref{lem:2}. The proof of the other direction, namely 
$\dist(\ccee) \leq k - \ell + 1$, is almost obvious, as we shall see later. 

We now introduce some necessary notations for the proof and establish 
the relationship among them.  
Let $\cce$ be the error-correcting code generated by the rows of $\bE$. 
Since $\bE$ is an $\ell\times d$ Cauchy matrix, $\cce$ is a $[d,\ell]$ MDS code (see Proof of Theorem~\ref{thm:cauchy_gcode}). 
Also, any set of $\ell$ columns of $\bE$ generate the whole column space. 
Therefore, for any $\ell$-subset $L$ of $[d]$ and any index $i \in [d] \setminus L$, we have 
\[
\bE[i] = \sum_{s \in L} a_{s,i}(L) \bE[s], 
\]
for some coefficients $a_{s,i}(L) \in \fq$. 
As a consequence, for any codeword $\ce = (c^{\bE}_1, \ldots, c^{\bE}_d) \in \cce$, we have
\begin{equation} 
\label{eq:ce}
c^{\bE}_i = \sum_{s \in L} a_{s,i}(L) c^{\bE}_s \quad (i \in [d] \setminus L).  
\end{equation} 
Note that as any set of $\ell$ columns of $\bE$ is linear independent, the coefficients $a_{s,i}(L)$ are all nonzero and uniquely determined by $\bE$ and $L$. From now on we only consider the case $L \subset [k]$. 
We often write $a_{i,j}$ instead of $a_{i,j}(L)$ to simplify the notation. 

For $j \in [d]$, let $\ccj$ be the row space of the matrix $\bB^{j} = [\bb_{j, \xi_1} \mid \bb_{j,\xi_2} \mid \cdots \mid \bb_{j,\xi_\M}]$.  
Then $\ccee = \cC^{1} + \cC^{2} + \cdots + \cC^{d}$ as a sum of spaces.
Note that according to the definition of $\bb_{j,\xi}$ in (\ref{eq:ee}),
\begin{itemize}
\item 
for $1 \leq j \leq k$: $\bb_{j,\xi} = \bE[i]$ if $\xi = (i,j) \in \I$, i.e. $1 \leq i \leq j$,
or $\xi = (j,i) \in \I$, i.e. $j+1 \leq i \leq d$,
\item 
for $k < j \leq d$: $\bb_{j,\xi} = \bE[i]$ if $\xi = (i,j) \in \I$, i.e. $1 \leq i \leq k$.  
\end{itemize}
Therefore, when $j \leq k$, the matrix $\bB^{j}$ has precisely $d$ nonzero columns, which are the same as
$d$ columns of $\bE$. 
On the other hand, when $j > k$, $\bB^{j}$ has precisely $k$ nonzero columns, which 
are the same as the first $k$ columns of $\bE$. 
Hence, for each codeword $\cj = (c^{j}_{\xi_1}, \ldots, c^{j}_{\xi_\M}) \in \ccj$ and each $\xi \in \I$, we have
\begin{equation} 
\label{eq:cj}
c^{j}_{\xi} = 
\begin{cases}
c^{\bE}_{i,j},& \text{ if } \xi = (i,j) \text{ or } \xi = (j,i),\\
0,& \text{ otherwise,}
\end{cases}
\end{equation}
where 
\begin{itemize}
\item
$(c^{\bE}_{1,j}, c^{\bE}_{2,j}, \ldots, c^{\bE}_{d,j})$ is a codeword of $\cce$, if $1 \leq j \leq k$,
\item 
$(c^{\bE}_{1,j}, c^{\bE}_{2,j}, \ldots, c^{\bE}_{k,j})$ is a codeword of $\cce$ restricted on the first $k$
coordinates, if $k < j \leq d$.
\end{itemize}
Since $\ccee = \cC^{1} + \cC^{2} + \cdots + \cC^{d}$, for each codeword $(c^{\ee}_{\xi_1},c^{\ee}_{\xi_2},\ldots,
c^{\ee}_{\xi_\M}) \in \ccee$ and each $\xi \in \I$, we have
\begin{equation} 
\label{eq:c}
c^{\ee}_{\xi} = 
\begin{cases}
c^{\bE}_{j,j},& \text{ if } \xi = (j,j),\\
c^{\bE}_{i,j} + c^{\bE}_{j,i},& \text{ if } \xi = (i,j) \text{ and } i < j.
\end{cases}
\end{equation}  
Note that $c^{\bE}_{i,j}$ is the $i$th coordinate of a codeword $\ce_j = (c^{\bE}_{1,j}, c^{\bE}_{2,j}, \ldots, c^{\bE}_{d,j})$
of $\cce$ and $c^{\bE}_{j,i}$ is the $j$th coordinate of another codeword $\ce_i = (c^{\bE}_{1,i}, c^{\bE}_{2,i}, \ldots, c^{\bE}_{d,i})$
of $\cce$. 

Due to (\ref{eq:ce}), for every $\ell$-subset $L$ of $[k]$ and 
for every $j \in [d]$ we have
\begin{equation} 
\label{eq:cej}
c^{\bE}_{i,j} = \sum_{s \in L} a_{s,i} c^{\bE}_{s,j},\quad \text{ for } i \in [d] 
\setminus L.  
\end{equation}

We now prove that $\dist(\ccee) \leq k - \ell + 1$. 
\begin{itemize}
\item If $d > k$ then 
$\dist(\ccj) = k - \ell + 1$, for every $j > k$, $j \leq d$.
Indeed, according to the definition, $\ccj$ is the row space of the matrix $\bB^{j}$. Moreover, according to the discussion above, for $j > k$, the matrix
$\bB^{j}$ has precisely $k$ nonzero columns, which are the same as the first
$k$ columns of $\bE$. These $k$ nonzero columns form an $\ell \times k$
Cauchy matrix, which in turn generates a $[k,\ell]$ MDS code of minimum distance $k - \ell + 1$ (see Proof of Theorem~\ref{thm:cauchy_gcode}). As other columns of $\bB^{j}$ are all-zero columns, 
we conclude that $\dist(\ccj) = k - \ell + 1$. As $\ccee = \cC^{1} + \cC^{2} + \cdots + \cC^{d}$ as a sum of spaces, we deduce that $\dist(\ccee) \leq k - \ell + 1$.
\item If $d = k$, then similar argument applies to $\ccj$ for any $j \in [k]$. 
\end{itemize} 
Thus, in both cases, $\dist(\ccee) \leq k - \ell + 1$.

We now proceed to the most important part of the proof of Lemma~\ref{lem:2}. 
Our goal is to show that $\dist(\ccee) \geq k - \ell + 1$.
Let $\cee = (c^{\ee}_{(1,1)}, c^{\ee}_{(1,2)},
\ldots, c^{\ee}_{(k,d)})$ be any codeword of $\ccee$. 
Let $U$ be any subset of $k - \ell$ elements of the index set 
$\I = \{(1,1),(1,2),\ldots, (k,d)\}$. 
We prove below that the coordinates of $\cee$ indexed by the elements of $U$
can be represented as linear combinations of the coordinates indexed by 
the elements of $\overline{U} = \I \setminus U$.
According to the discussion at the beginning of the proof of Lemma~\ref{lem:2},
this implies that $\dist(\ccee) \geq k - \ell + 1$. 
It suffices to show that for every index $(s,t) \in U$, 
the corresponding coordinate $c^{\ee}_{(s,t)}$ can be written as a linear
combination of $c^{\ee}_{(i,j)}$'s where $(i,j) \in \overline{U}$. 
We divide the proof into different cases, depending on whether $s = t$
or $s \neq t$. Three additional lemmas are introduced below to tackle those
cases separately. 
We henceforth drop $\ee$ from the notation $\cee$ to simplify the presentation.  

\begin{lemma}
\label{lem:case1} 
Let $\bc = (c_{(1,1)}, c_{(1,2)},\ldots, c_{(k,d)})$ be an arbitrary codeword of $\ccee$.
Suppose that $(t,t) \in U$ $(t \in [k])$. 
Then there exists an $\ell$-subset $L$ of $[k]$
such that
\begin{equation} 
\label{eq:ckk}
c_{(t,t)} = \sum_{i \in L}a_{i,t}(L)c^*_{(i,t)} 
- \sum_{\genfrac{}{}{0pt}{}{i \leq j}{i,j \in L}} a_{i,t}(L)a_{j,t}(L)c_{(i,j)}, 
\end{equation} 
where \vspace{-10pt}
\[
c^*_{(i,t)} =
\begin{cases}
c_{(i,t)}, &\text{ if } i \leq t,\\
c_{(t,i)}, &\text{ otherwise}.
\end{cases}
\]
Moreover, none of the indices $(i,t)$ (or $(t,i)$) and $(i,j)$ are in $U$
for every $i \in L$ and $j \in L$.  
\end{lemma}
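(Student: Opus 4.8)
The plan is to exploit the Cauchy/MDS structure of $\bE$ to "solve" the diagonal coordinate $c_{(t,t)}$ in terms of a carefully chosen set of off-diagonal coordinates, making sure the chosen set avoids the small bad set $U$. The starting point is formula (\ref{eq:c}): since $(t,t)$ is a diagonal index, $c_{(t,t)} = c^{\bE}_{t,t}$, the $t$-th coordinate of the codeword $\ce_t = (c^{\bE}_{1,t},\ldots,c^{\bE}_{d,t}) \in \cce$. Now I would use the MDS expansion (\ref{eq:cej}): for any $\ell$-subset $L \subseteq [k]$ with $t \notin L$, we have $c^{\bE}_{t,t} = \sum_{i \in L} a_{i,t}(L)\, c^{\bE}_{i,t}$. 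This already expresses $c_{(t,t)}$ via the $c^{\bE}_{i,t}$ with $i \in L$, but those are \emph{not} coordinates of $\cee$ — they are "half" of the symmetrized coordinates. To convert, I would write, for each $i \in L$, $c^{\bE}_{i,t} = c^*_{(i,t)} - c^{\bE}_{t,i}$ using (\ref{eq:c}) (where $c^*_{(i,t)}$ is $c_{(i,t)}$ or $c_{(t,i)}$ according to whether $i\le t$, and $c^{\bE}_{t,i}$ is the $t$-th coordinate of $\ce_i$). Substituting gives
\[
c_{(t,t)} = \sum_{i\in L} a_{i,t}(L)\,c^*_{(i,t)} - \sum_{i\in L} a_{i,t}(L)\,c^{\bE}_{t,i}.
\]

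**Next I would** handle the residual terms $c^{\bE}_{t,i}$, which are still single-index codeword coordinates. Here I apply (\ref{eq:cej}) a second time, now to the codeword $\ce_i$ and the \emph{same} subset $L$ (legitimate because $t \notin L$): $c^{\bE}_{t,i} = \sum_{j\in L} a_{j,t}(L)\, c^{\bE}_{j,i}$. Plugging this in,
\[
c_{(t,t)} = \sum_{i\in L} a_{i,t}(L)\,c^*_{(i,t)} - \sum_{i\in L}\sum_{j\in L} a_{i,t}(L)a_{j,t}(L)\, c^{\bE}_{j,i}.
\]
Since $i,j$ both range over $L\subseteq[k]$, each pair $\{i,j\}$ corresponds to an index $(i,j)\in\I$ (with $i\le j$ after symmetrizing), and by (\ref{eq:c}) the symmetric combination $c^{\bE}_{i,j}+c^{\bE}_{j,i}$ for $i<j$ equals $c_{(i,j)}$, while $c^{\bE}_{i,i}=c_{(i,i)}$. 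Collapsing the double sum over ordered pairs into a sum over unordered pairs $i\le j$ (the coefficient $a_{i,t}a_{j,t}$ is symmetric in $i,j$) yields exactly (\ref{eq:ckk}), with the diagonal terms $i=j\in L$ contributing $a_{i,t}(L)^2 c_{(i,i)}$ as part of that sum.

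**The main obstacle**, and the real content of the lemma's "Moreover" clause, is choosing $L$ so that \emph{none} of the indices appearing on the right-hand side lie in $U$. The indices used are: $(i,t)$ or $(t,i)$ for $i \in L$ (that is $|L|=\ell$ of them), and $(i,j)$ for $i,j \in L$, $i\le j$ (that is $\binom{\ell+1}{2}$ of them). Since $|U| = k-\ell$, I would argue by a counting/pigeonhole argument that a suitable $L\subseteq [k]\setminus\{t\}$ exists: out of the $k-1$ candidate elements of $[k]\setminus\{t\}$, at most $k-\ell$ are "forbidden" in the sense of making some required index fall into $U$, so there remain at least $\ell-1 \ge \ell$... — more carefully, I would show that the elements of $[k]$ that are "touched" by an index in $U$ number at most $k-\ell$ (each element of $U$ touches at most two coordinates of $[k]$, but since we only need to avoid a small absolute number and $|L|=\ell\le k$, a direct inclusion argument works), so one can pick $L$ inside the untouched part of $[k]\setminus\{t\}$, which has size at least $\ell$. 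The delicate point is bookkeeping: an index $(s,t')\in U$ with $t'=t$ forbids putting $s$ into $L$; an index $(i,j)\in U$ with both $i,j\le k$ forbids having both $i$ and $j$ in $L$ simultaneously. I expect this combinatorial selection — showing $[k]\setminus\{t\}$ still contains an admissible $\ell$-subset after removing everything $U$ could obstruct — to be where the proof needs the most care, and it is presumably also where the hypothesis $\ell < k$ and the exact size $k-\ell$ of $U$ are used. The algebraic identity (\ref{eq:ckk}) itself is then just the substitution chain above.
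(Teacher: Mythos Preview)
Your algebraic derivation of (\ref{eq:ckk}) is correct and essentially identical to the paper's: start from $c_{(t,t)}=c^{\bE}_{t,t}$, expand via (\ref{eq:cej}), convert each $c^{\bE}_{i,t}$ to $c^*_{(i,t)}-c^{\bE}_{t,i}$ using (\ref{eq:c}), expand again, and symmetrize the double sum. The paper carries out exactly this chain in its equations (\ref{eq:1})--(\ref{eq:2}).

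The gap is in the combinatorial selection of $L$, which you correctly flag as the delicate part but do not finish. Your observation that an element $(i,j)\in U$ with $i,j\in[k]$ only forbids having \emph{both} $i$ and $j$ in $L$ is the key; what you are missing is the consequence: it suffices to exclude just \emph{one} coordinate of each such pair. Concretely, the paper partitions $U$ into elements of the form $(t,u)$ (say $a$ of them, including $(t,t)$), elements $(s,t)$ (say $b$ of them, again including $(t,t)$), and elements $(p,q)$ not involving $t$ (say $c$ of them), so that $a+b+c-1=k-\ell$. The ``bad'' set is then
\[
L^* \;=\; \{u_1,\dots,u_a\}\cup\{s_1,\dots,s_{b-1}\}\cup\{q_1,\dots,q_c\},
\]
where for each $(p_r,q_r)$ only the second coordinate $q_r$ is excluded. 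This gives $|L^*|\le a+(b-1)+c=k-\ell$, so $[k]\setminus L^*$ has at least $\ell$ elements and any $\ell$-subset of it works (note $t=u_a\in L^*$, so $t\notin L$ automatically). Once you add this one-line counting, your argument is complete and coincides with the paper's.
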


According to Lemma~\ref{lem:case1}, $c_{(t,t)}$ can be written
as a linear combination of the coordinates of $\cee$ indexed by the elements
in $\overline{U}$, which is precisely what we want to show. 
Hence the case $(s=t,t) \in U$ is settled.
%%%%%%%%%%%%%%%%%%%%%%%%%%%%%%%%%%%%%%%%%%%
\begin{proof}[Proof of Lemma~\ref{lem:case1}]
We first construct an appropriate subset $L$ such that 
none of the sub-indices of the terms in the right-hand side of (\ref{eq:ckk}) belong to $U$.
Once such a subset is chosen, we proceed to prove that (\ref{eq:ckk}) indeed holds.
%%%%%%%%%%%%%%%%%%%%%%%%%%%%%%%%%%%%
\begin{table}[h]
	\centering
		\begin{tabular}{|l|c|c|}
					\hline 
					& Elements & Count\\
					\hline 
					Starts with $t$ &  $(t, u_1), (t, u_2), \ldots, (t, u_a = t)$  &  $a$\\
					\hline
					Ends with $t$ & $(s_1, t), (s_2,t), \ldots, (s_b = t, t)$ & $b$\\
					\hline
					Not starts nor ends with $t$ & $(p_1, q_1), (p_2, q_2), \ldots, (p_c, q_c)$ & $c$ \\
					\hline
		\end{tabular}
		\caption{All elements of $U$ - Case 1.}
		\label{tab:case1}
\end{table}
\vspace{-5pt}

We list all elements of $U$ in Table~\ref{tab:case1}. 
Note that as the element $(t,t)$ is counted twice, we have $|U| = k - \ell=a+b+c-1$. 
It is straightforward that if $L$ does not contain any element in the 'bad' set
\[
L^* = \{u_1,\ldots,u_a = t\} \cup \{s_1, \ldots, s_{b-1}\} \cup \{q_1,\ldots, q_c\},
\] 
then none of the sub-indices of the terms in the right-hand side of (\ref{eq:ckk}) belong to $U$, as desired. As $|L^*| \leq a + (b-1) + c = k - \ell$, the set $[k] \setminus L^*$
has cardinality at least $\ell$. Therefore we can choose a legitimate $L$ by taking an arbitrary subset of $\ell$ elements of $[k] \setminus L^*$.     

We now show that (\ref{eq:ckk}) holds for $L$ chosen as above. 
By (\ref{eq:c}) and (\ref{eq:cej}) we have
\begin{equation} 
\label{eq:1}
\begin{split}  
\sum_{i \in L}a_{i,t}c^*_{(i,t)} 
&\underset{i \neq t}{=} \sum_{i \in L}a_{i,t}(c^{\bE}_{i,t}+c^{\bE}_{t,i})\\
&= \sum_{i \in L}a_{i,t}c^{\bE}_{i,t}
+\sum_{i \in L}a_{i,t}c^{\bE}_{t,i}\\
&= c^{\bE}_{t,t}
+\sum_{i \in L}a_{i,t}(\sum_{j \in L}a_{j,t}c^{\bE}_{j,i})\\
&= c_{(t,t)}+\sum_{i \in L}\sum_{j \in L}a_{i,t}a_{j,t}c^{\bE}_{j,i}\\
&= c_{(t,t)}+\sum_{i \in L}\sum_{j \in L}a_{i,t}a_{j,t}c^{\bE}_{i,j},
\end{split} 
\end{equation}
where in the last transition, the indices $i$ and $j$ are swapped.  
Also by (\ref{eq:c}) we have
\begin{equation} 
\label{eq:2}
\begin{split} 
&\quad \sum_{\genfrac{}{}{0pt}{}{i \leq j}{i,j \in L}} a_{i,t}a_{j,t}c_{(i,j)}\\
&= \sum_{\genfrac{}{}{0pt}{}{i = j}{i,j \in L}} a_{i,t}a_{j,t}c_{(i,j)}
+ \sum_{\genfrac{}{}{0pt}{}{i < j}{i,j \in L}} a_{i,t}a_{j,t} (c^{\bE}_{i,j} + c^{\bE}_{j,i})\\
&= \sum_{\genfrac{}{}{0pt}{}{i = j}{i,j \in L}} a_{i,t}a_{j,t}c^{\bE}_{i,j}
+\sum_{\genfrac{}{}{0pt}{}{i < j}{i,j \in L}} a_{i,t}a_{j,t} c^{\bE}_{i,j}%\\
%& \quad 
+\sum_{\genfrac{}{}{0pt}{}{i < j}{i,j \in L}} a_{i,t}a_{j,t} c^{\bE}_{j,i}\\
&= \sum_{\genfrac{}{}{0pt}{}{i = j}{i,j \in L}} a_{i,t}a_{j,t}c^{\bE}_{i,j}
+\sum_{\genfrac{}{}{0pt}{}{i < j}{i,j \in L}} a_{i,t}a_{j,t} c^{\bE}_{i,j}%\\
%& \quad 
+\sum_{\genfrac{}{}{0pt}{}{j < i}{i,j \in L}} a_{i,t}a_{j,t} c^{\bE}_{i,j}\\
&= \sum_{i \in L}\sum_{j \in L}a_{i,t}a_{j,t}c^{\bE}_{i,j},
\end{split} 
\end{equation} 
where the second to last transition is done by swapping the indices $i$
and $j$ in the third sum.  
Combining (\ref{eq:1}) and (\ref{eq:2}) we finish the proof of the Lemma~\ref{lem:case1} . 
\end{proof}
 \vskip 5pt 

We now consider $(s,t) \in U$ where $s \neq t$. We examine another two cases, 
depending on whether $t \leq k$ or $t > k$.   

\vskip 5pt 
\begin{lemma} 
\label{lem:case2}
Let $\bc = (c_{(1,1)}, c_{(1,2)},\ldots, c_{(k,d)})$ be an arbitrary codeword of $\ccee$.
Suppose that $(s,t) \in U$ and $t \in [k]$. 
Then there exists an $\ell$-subset $L$ of $[k]$ such that $s \in L$, $t \notin L$,
and
\begin{equation} 
\label{eq:claim2}
c_{(t,t)} = \sum_{i \in L}a_{i,t}(L)c^*_{(i,t)} 
- \sum_{\genfrac{}{}{0pt}{}{i \leq j}{i,j \in L}} a_{i,t}(L)a_{j,t}(L)c_{(i,j)}, 
\end{equation} 
where 
\[
c^*_{(i,t)} =
\begin{cases}
c_{(i,t)}, &\text{ if } i \leq t,\\
c_{(t,i)}, &\text{ otherwise}.
\end{cases}
\]
Moreover, for every $i \in L$ and $j \in L$, none of the indices $(i,t)$ (or $(t,i)$), except $(s,t)$, and $(i,j)$, except $(s,s)$, are in $U$.  
\end{lemma}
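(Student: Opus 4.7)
The plan is to follow the template of the proof of Lemma~\ref{lem:case1}. The algebraic identity (\ref{eq:claim2}) is literally the same formula as (\ref{eq:ckk}), and its derivation in (\ref{eq:1})--(\ref{eq:2}) relies only on $L$ being an $\ell$-subset of $[k]$ not containing $t$, together with the relations (\ref{eq:c}) and (\ref{eq:cej}). Hence once a suitable $L$ has been produced, the identity (\ref{eq:claim2}) follows verbatim from those calculations. The real content of Lemma~\ref{lem:case2} is therefore to produce an $\ell$-subset $L\subseteq[k]$ with $s\in L$, $t\notin L$, such that the only indices of $U$ appearing on the right-hand side of (\ref{eq:claim2}) are the two permitted exceptions $(s,t)$ and $(s,s)$.

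To build such an $L$, I would mirror the forbidden-set argument of Lemma~\ref{lem:case1} while accounting for the two allowed exceptions. I partition $U$ into the distinguished element $(s,t)$, the element $(s,s)$ if it happens to lie in $U$, elements of the form $(x,t)\in U$ with $x\neq s$, elements of the form $(t,x)\in U$ with $x>t$, and the remaining set $U_3$ of elements $(x,y)\in U$ with $x,y\neq t$ and $(x,y)\neq(s,s)$. I then assemble $L^*\subseteq[k]$ by putting into it the non-$t$ coordinate of each element in the two $t$-containing groups and, for each $(x,y)\in U_3$, some coordinate different from $s$; such a choice is always possible because $(x,y)\neq(s,s)$ forces at least one of $x,y$ to differ from $s$. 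Each element of $U\setminus\{(s,t),(s,s)\}$ contributes at most one element to $L^*$, so $|L^*|\leq k-\ell-1$; moreover, by design $s\notin L^*$. It follows that $|[k]\setminus L^*|\geq \ell+1$ and $s\in [k]\setminus L^*$, so I can pick $L$ to be any $\ell$-subset of $[k]\setminus L^*$ that contains $s$ and avoids $t$.

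Verification that this $L$ is adequate is then automatic from the construction. For $i\in L\setminus\{s\}$, the index $(i,t)$ or $(t,i)$ of $c^*_{(i,t)}$ lies outside $U$, because otherwise $i$ would have been thrown into $L^*$ by the first two groups. For $i\leq j$ with $i,j\in L$ and $(i,j)\neq(s,s)$, the index $(i,j)$ similarly cannot lie in $U$ by the $U_3$ exclusion rule. The main obstacle I anticipate is the bookkeeping for $U_3$: one must arrange simultaneously that $s$ belongs to $L$ and that every non-$(s,s)$ bad pair in $U_3$ is broken by $L^*$, and what makes this feasible is precisely that $(s,s)$ has been excised from $U_3$, so a coordinate different from $s$ is always available. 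Once $L$ is fixed, (\ref{eq:claim2}) follows from the chain (\ref{eq:1})--(\ref{eq:2}) applied word for word.
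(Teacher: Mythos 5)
Your proof is correct but takes a genuinely different route from the paper. The paper splits into four sub-cases according to whether $(s,s)\in U$ and whether $(t,t)\in U$, and in each sub-case it lists the elements of $U$ in a table, builds a forbidden set $L^*$, and separately verifies the cardinality bound. Your argument unifies the four cases: by removing the two permitted exceptions $(s,t)$ and (if present) $(s,s)$ from $U$ \emph{before} assembling $L^*$, and by noting that every remaining element has a coordinate different from $s$ (which is precisely what excising $(s,s)$ guarantees), you obtain $|L^*|\le k-\ell-1$ and $s\notin L^*$ in one stroke, with no case analysis. The extra unit of slack $|[k]\setminus L^*|\ge\ell+1$ is what lets you additionally discard $t$ (needed for the algebraic identity, and not automatically in $L^*$ unless $(t,t)\in U$) while still having room for an $\ell$-subset through $s$ --- you mention this, but the role of the $+1$ is worth a sentence in a polished write-up. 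The paper's version buys explicitness at the cost of repetition; yours isolates the two structural facts (each bad index contributes one coordinate $\neq s$, and $t$ is handled by the spare unit) that actually drive the argument. Both then invoke the same algebraic chain to establish the identity (\ref{eq:claim2}), which, as you correctly observe, uses only that $L$ is an $\ell$-subset of $[k]$ avoiding $t$.
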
 

As a consequence of Lemma~\ref{lem:case2}, $c_{(t,t)}$ can be written
as a linear combination of $c_{(s,t)}$, $c_{(s,s)}$, and the coordinates of $\cee$ indexed by the elements in $\overline{U}$. 
As noted before, the coefficient of $c_{(s,t)}$, namely $a_{s,t}$, is nonzero. 
Therefore, $c_{(s,t)}$ can also be written
as a linear combination of $c_{(t,t)}$, $c_{(s,s)}$, and the coordinates of $\cee$ indexed by the elements in $\overline{U}$.
Moreover, according to Lemma~\ref{lem:case1}, if $(t,t) \in U$ then $c_{(t,t)}$ can be written
as a linear combination of the coordinates of $\cee$ indexed by the elements in $\overline{U}$. Similar assertion holds for $c_{(s,s)}$ if $(s,s) \in U$. 
Thus, we conclude that $c_{(s,t)}$ can be written
as a linear combination of just the coordinates of $\cee$ indexed by the elements in $\overline{U}$.
Hence the case $(s,t) \in U$, $s \neq t \leq k$ is settled.
%%%%%%%%%%%%%%%%%%%%%%%%%%%%%%%%%%%%%%
\begin{proof}[Proof of Lemma~\ref{lem:case2}]
We first construct an appropriate subset $L$ such that $s \in L$, $t \notin L$, and
none of the sub-indices of the terms in the right-hand side of (\ref{eq:claim2}) belong to $U$, except for $(s,t)$ and $(s,s)$.
Once such a subset is chosen, the proof that (\ref{eq:claim2}) indeed holds
is the same as that of Lemma~\ref{lem:case1}.
We divide the proof into four sub-cases, depending on whether $(s,s) \in U$
and/or $(t,t) \in U$.  

\nin\textbf{Sub-case 2-1:} $(s,s) \notin U$, $(t,t) \notin U$.  
We list all elements of $U$ in Table~\ref{tab:subcase1}. 
\begin{table}[H]
	\centering
		\begin{tabular}{|l|c|c|}
					\hline 
					& Elements & Count\\
					\hline 
					Starts with $s$ & $(s,t_1), (s,t_2),\ldots,(s,t_a = t)$ & a\\
					\hline
					Ends with $s$ & $(r_1,s), (r_2,s), \ldots,(r_b,s)$ & b\\
					\hline
					Starts with $t$ & $(t, u_1), (t, u_2), \ldots, (t, u_c)$ &  $c$\\
					\hline
					Ends with $t$ & $(s_1=s, t), (s_2,t), \ldots, (s_d, t)$ & $d$\\
					\hline
					Not starts/ends with $s$,$t$ & $(p_1, q_1), (p_2, q_2), \ldots, (p_e, q_e)$ & $e$ \\
					\hline
		\end{tabular}
		\caption{All elements of $U$ - Sub-case 2-1.}
		\label{tab:subcase1}
\end{table}
Note that as the element $(s,t)$ is counted twice, we have $|U| = k - \ell=a+b+c+d+e-1$. The 'bad' set to be excluded is 
\[
\begin{split}
L^* &= \{t_1,\ldots,t_a = t\} \cup \{r_1, \ldots, r_b\} \cup \{u_1,\ldots, u_c\}\\
&\quad\ \cup \{s_2,\ldots,r_d\} \cup \{q_1,\ldots,q_e\}.
\end{split}
\] 
As $|L^*| \leq a + b + c + (d-1) + e = k - \ell$, the set $[k] \setminus L^*$
has cardinality at least $\ell$. 
Moreover, $s \in [k] \setminus L^*$. 
Therefore we can choose a legitimate $L$ by taking an arbitrary 
subset of $\ell$ elements of $[k] \setminus L^*$ that contains $s$.  \\

\nin\textbf{Sub-case 2-2:} $(s,s) \in U$, $(t,t) \in U$.  
We list all elements of $U$ in Table~\ref{tab:subcase2}. 
\begin{table}[h]
	\centering
		\begin{tabular}{|l|c|c|}
					\hline 
					& Elements & Count\\
					\hline 
					Starts with $s$ & $(s,t_1=s), (s,t_2),\ldots,(s,t_a = t)$ & a\\
					\hline
					Ends with $s$ & $(r_1=s,s), (r_2,s), \ldots,(r_b,s)$ & b\\
					\hline
					Starts with $t$ & $(t, u_1), (t, u_2), \ldots, (t, u_c = t)$ &  $c$\\
					\hline
					Ends with $t$ & $(s_1=s, t), (s_2,t), \ldots, (s_d = t, t)$ & $d$\\
					\hline
					Not starts/ends with $s$,$t$ & $(p_1, q_1), (p_2, q_2), \ldots, (p_e, q_e)$ & $e$ \\
					\hline
		\end{tabular}
		\caption{All elements of $U$ - Sub-case 2-2.}
		\label{tab:subcase2}
\end{table}
Note that as the elements $(s,t)$, $(s,s)$, and $(t,t)$ are counted twice, we have $|U| = k - \ell=a+b+c+d+e-3$. The 'bad' set to be excluded is 
\[
\begin{split}
L^* &= \{t_2,\ldots,t_a = t\} \cup \{r_2, \ldots, r_b\} \cup \{u_1,\ldots, u_{c-1}\}\\
&\quad\ \cup \{s_2,\ldots,s_{d-1}\} \cup \{q_1,\ldots,q_e\}.
\end{split}
\] 
As $|L^*| \leq (a-1) + (b-1) + (c - 1) + (d-2) + e < k - \ell$, the set $[k] \setminus L^*$
has cardinality greater than $\ell$. 
Moreover, $s \in [k] \setminus L^*$. 
Therefore we can choose a legitimate $L$ by taking an arbitrary subset of $\ell$ elements of $[k] \setminus L^*$ that contains $s$.\\    

\nin\textbf{Sub-case 2-3:} $(s,s) \in U$, $(t,t) \notin U$.  
We list all elements of $U$ in Table~\ref{tab:subcase3}. 
\begin{table}[h]
	\centering
		\begin{tabular}{|l|c|c|}
					\hline 
					& Elements & Count\\
					\hline 
					Starts with $s$ & $(s,t_1=s), (s,t_2),\ldots,(s,t_a = t)$ & a\\
					\hline
					Ends with $s$ & $(r_1=s,s), (r_2,s), \ldots,(r_b,s)$ & b\\
					\hline
					Starts with $t$ & $(t, u_1), (t, u_2), \ldots, (t, u_c)$ &  $c$\\
					\hline
					Ends with $t$ & $(s_1=s, t), (s_2,t), \ldots, (s_d, t)$ & $d$\\
					\hline
					Not starts/ends with $s$,$t$ & $(p_1, q_1), (p_2, q_2), \ldots, (p_e, q_e)$ & $e$ \\
					\hline
		\end{tabular}
		\caption{All elements of $U$ - Sub-case 2-3.}
		\label{tab:subcase3}
\end{table}
Note that as the elements $(s,t)$ and $(s,s)$ are counted twice, we have $|U| = k - \ell=a+b+c+d+e-2$. The 'bad' set to be excluded is 
\[
\begin{split}
L^* &= \{t_2,\ldots,t_a = t\} \cup \{r_2, \ldots, r_b\} \cup \{u_1,\ldots, u_c\}\\
&\quad\ \cup \{s_2,\ldots,s_d\} \cup \{q_1,\ldots,q_e\}.
\end{split}
\] 
As $|L^*| \leq (a-1) + (b-1) + c + (d-1) + e < k - \ell$, the set $[k] \setminus L^*$
has cardinality greater than $\ell$. 
Moreover, $s \in [k] \setminus L^*$. 
Therefore we can choose a legitimate $L$ by taking an arbitrary subset of $\ell$ elements of $[k] \setminus L^*$ that contains $s$. \\    

\nin\textbf{Sub-case 2-4:} $(s,s) \notin U$, $(t,t) \in U$.  
We list all elements of $U$ in Table~\ref{tab:subcase4}. 
\begin{table}[h]
	\centering
		\begin{tabular}{|l|c|c|}
					\hline 
					& Elements & Count\\
					\hline 
					Starts with $s$ & $(s,t_1), (s,t_2),\ldots,(s,t_a = t)$ & a\\
					\hline
					Ends with $s$ & $(r_1,s), (r_2,s), \ldots,(r_b,s)$ & b\\
					\hline
					Starts with $t$ & $(t, u_1), (t, u_2), \ldots, (t, u_c=t)$ &  $c$\\
					\hline
					Ends with $t$ & $(s_1=s, t), (s_2,t), \ldots, (s_d=t, t)$ & $d$\\
					\hline
					Not starts/ends with $s$,$t$ & $(p_1, q_1), (p_2, q_2), \ldots, (p_e, q_e)$ & $e$ \\
					\hline
		\end{tabular}
		\caption{All elements of $U$ - Sub-case 2-4.}
		\label{tab:subcase4}
\end{table}
Note that as the elements $(s,t)$ and $(t,t)$ are counted twice, we have $|U| = k - \ell=a+b+c+d+e-2$. The 'bad' set to be excluded is 
\[
\begin{split}
L^* &= \{t_1,\ldots,t_a = t\} \cup \{r_1, \ldots, r_b\} \cup \{u_1,\ldots, u_{c-1}\}\\
&\quad\ \cup \{s_2,\ldots,s_{d-1}\} \cup \{q_1,\ldots,q_e\}.
\end{split}
\] 
As $|L^*| \leq a + b + (c-1) + (d-2) + e < k - \ell$, the set $[k] \setminus L^*$
has cardinality greater than $\ell$. 
Moreover, $s \in [k] \setminus L^*$. 
Therefore we can choose a legitimate $L$ by taking an arbitrary subset of $\ell$ elements of $[k] \setminus L^*$ that contains $s$.     

Thus, we complete the proof of Lemma~\ref{lem:case2}. 
\end{proof}

\begin{lemma} 
\label{lem:case3}
Let $\bc = (c_{(1,1)}, c_{(1,2)},\ldots, c_{(k,d)})$ be an arbitrary codeword of $\ccee$.
Suppose that $(s,t) \in U$, $s \neq t$, and $t > k$. 
Then there exists an $\ell$-subset $L$ of $[k]$ such that 
\begin{equation} 
\label{eq:ckm}
\begin{split} 
c_{(s,t)} &= \sum_{i \in L}a_{i,t}(L)c^*_{(i,s)}
+\sum_{i \in L} a_{i,s}(L) c_{(i,t)}\\
&\quad - \sum_{\genfrac{}{}{0pt}{}{i \leq j}{i, j \in L}} 
\big(a_{i,t}(L)a_{j,s}(L)+a_{i,s}(L)a_{j,t}(L)\big)c_{(i,j)} 
\end{split} 
\end{equation}
where 
\[
c^*_{(i,s)} =
\begin{cases}
c_{(i,s)}, &\text{ if } i \leq s,\\
c_{(s,i)}, &\text{ otherwise}.
\end{cases}
\]
Moreover, for every $i \in L$ and $j \in L$, none of the indices $(i,t)$, $(i,s)$ (or $(s,i)$), and $(i,j)$, are in $U$.  
\end{lemma}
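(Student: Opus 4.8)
\emph{Proof idea for Lemma~\ref{lem:case3}.}
Since $(s,t)\in\I$ with $s\neq t$ and $t>k$, we automatically have $1\le s\le k<t\le d$, so in particular $s<t$; hence by (\ref{eq:c}) the target coordinate splits as $c_{(s,t)}=c^{\bE}_{s,t}+c^{\bE}_{t,s}$, where $c^{\bE}_{s,t}$ is the $s$-th coordinate of the (length-$k$ restriction of the) codeword $\ce_t$ and $c^{\bE}_{t,s}$ is the $t$-th coordinate of the full $[d,\ell]$ codeword $\ce_s$ of $\cce$. The plan is to fix an $\ell$-subset $L\subset[k]$ — so $t\notin L$ automatically, and we shall arrange $s\notin L$ — expand each of the two terms by repeated use of the MDS relation (\ref{eq:cej}) and the folding identity (\ref{eq:c}), and finally choose $L$ so that every coordinate $c_{(i,j)}$, $c_{(i,t)}$, $c^{*}_{(i,s)}$ surviving in the expansion is indexed in $\overline{U}$. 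Unlike in Lemma~\ref{lem:case2}, this will directly yield $c_{(s,t)}$ as a combination of $\overline{U}$-indexed coordinates, with no recursive appeal to Lemma~\ref{lem:case1}.

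For the expansion, since $s\notin L$ and $s\le k$, relation (\ref{eq:cej}) applied to column $t$ gives $c^{\bE}_{s,t}=\sum_{i\in L}a_{i,s}c^{\bE}_{i,t}$; folding each term via $c^{\bE}_{i,t}=c_{(i,t)}-c^{\bE}_{t,i}$ (valid as $i<t$) and then expanding $c^{\bE}_{t,i}=\sum_{j\in L}a_{j,t}c^{\bE}_{j,i}$ (legitimate since $t\notin L$ and $\ce_i$, $i\in L\subset[k]$, is a genuine codeword of $\cce$) yields $c^{\bE}_{s,t}=\sum_{i\in L}a_{i,s}c_{(i,t)}-\sum_{i,j\in L}a_{i,s}a_{j,t}c^{\bE}_{j,i}$. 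Symmetrically, $c^{\bE}_{t,s}=\sum_{j\in L}a_{j,t}c^{\bE}_{j,s}$, and using $c^{\bE}_{j,s}=c^{*}_{(j,s)}-c^{\bE}_{s,j}$ (as $j\neq s$) followed by $c^{\bE}_{s,j}=\sum_{i\in L}a_{i,s}c^{\bE}_{i,j}$ gives $c^{\bE}_{t,s}=\sum_{j\in L}a_{j,t}c^{*}_{(j,s)}-\sum_{i,j\in L}a_{i,s}a_{j,t}c^{\bE}_{i,j}$. Adding the two and relabelling $i\leftrightarrow j$ in one double sum, the leftover double sum has coefficient $a_{i,t}a_{j,s}+a_{i,s}a_{j,t}$, which is symmetric under $i\leftrightarrow j$; splitting $L\times L$ into diagonal and the two off-diagonal halves and using $c^{\bE}_{i,j}+c^{\bE}_{j,i}=c_{(i,j)}$ (for $i<j$) and $c^{\bE}_{i,i}=c_{(i,i)}$ collapses it to $\sum_{i\le j,\,i,j\in L}(a_{i,t}a_{j,s}+a_{i,s}a_{j,t})c_{(i,j)}$, which is precisely (\ref{eq:ckm}).

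It remains to choose $L$, which I would do exactly as in the proofs of Lemmas~\ref{lem:case1} and~\ref{lem:case2}, the bookkeeping here being lighter because $t>k$ means no element of $\I$ begins with $t$. Tabulate the $k-\ell$ elements of $U$ and form a ``bad'' set $L^{*}\subset[k]$ by picking from each listed element the one index of $[k]$ that must be kept out of $L$: from $(i,t)\in U$ take $i$ (so in particular $s\in L^{*}$); from an element of $U$ involving $s$ whose other index $i$ lies in $[k]$ take $i$; from a remaining element $(i,j)\in U$ with $i\le j\le k$ take $j$ (for a diagonal $(i,i)$ take $i$). Elements of $U$ of the form $(s,t)$, $(s,s)$, or having a coordinate exceeding $k$ force nothing beyond what is already accounted for, so each element of $U$ contributes at most one index to $L^{*}$ and $|L^{*}|\le|U|=k-\ell$; hence $[k]\setminus L^{*}$ has at least $\ell$ elements and we take $L$ to be any $\ell$-subset of it. Since $s\in L^{*}$ we get $s\notin L$, and by construction none of $(i,t)$, $(i,s)$ or $(s,i)$, and $(i,j)$ lies in $U$ for $i,j\in L$, which is the ``moreover'' assertion.

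The main obstacle is this last step: verifying that $|L^{*}|\le k-\ell$, i.e.\ that the single index $s$ that must be deleted is ``paid for'' by the element $(s,t)\in U$ itself rather than by a distinct element, so that no element of $U$ is wasted in the accounting — exactly the combinatorial care exercised in Lemmas~\ref{lem:case1}--\ref{lem:case2}. The algebraic verification of (\ref{eq:ckm}) is longer but entirely mechanical once the index ranges ($\le k$ versus $>k$) are tracked consistently throughout, so I expect no difficulty there.
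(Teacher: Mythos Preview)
Your proposal is correct and follows essentially the same strategy as the paper: verify the identity (\ref{eq:ckm}) via repeated application of (\ref{eq:c}) and (\ref{eq:cej}) (you expand $c_{(s,t)}$ forward while the paper simplifies the right-hand side, but it is the same computation), and construct $L\subset[k]$ with $s\notin L$ by excluding a bad set $L^{*}$ of size at most $k-\ell$. Your unified accounting for $L^{*}$ --- letting the element $(s,t)$ itself pay for the forced exclusion of $s$ --- slightly streamlines the paper's explicit case split on whether $(s,s)\in U$, but the underlying argument is identical.
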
 

According to Lemma~\ref{lem:case3}, for $(s,t) \in U$, $s \neq t$, 
$t > k$, the coordinate $c_{(s,t)}$ can be written
as a linear combination of the coordinates of $\cee$ indexed by the elements
in $\overline{U}$, which is precisely what we want to show. 
Hence this very last case is settled.
As a consequence, the proof of Lemma~\ref{lem:2} also follows.

\begin{proof}[Proof of Lemma~\ref{lem:case3}]
We first construct an appropriate subset $L$ such that 
none of the sub-indices of the terms in the right-hand side of (\ref{eq:ckm}) belong to $U$.
Once such a subset is chosen, we proceed to prove that (\ref{eq:ckm}) indeed holds.
We consider two sub-cases, namely $(s,s) \in U$ and $(s,s) \notin U$. 
Note that as $t > k$, according to the definition of $\I$ in (\ref{eq:I}), $(t,t)$ is
not a valid index, and hence we do not have to consider the element $(t,t)$. \\ 

\nin\textbf{Sub-case 3-1:} $(s,s) \notin U$. 
We list all elements of $U$ in Table~\ref{tab:subcase3-1}. 
Since $t > k$, note that there is no valid index of the form $(t,u)$, according
to the definition of $\I$ in (\ref{eq:I}). 
\begin{table}[h]
	\centering
		\begin{tabular}{|l|c|c|}
					\hline 
					& Elements & Count\\
			    Starts with $s$ & $(s,t_1), (s,t_2),\ldots,(s,t_a = t)$ & a\\
					\hline
					Ends with $s$ & $(r_1,s), (r_2,s), \ldots,(r_b,s)$ & b\\
					\hline
					Ends with $t$ & $(s_1=s, t), (s_2,t), \ldots, (s_c, t)$ & $c$\\
					\hline
					Not starts/ends with $s$,$t$ & $(p_1, q_1), (p_2, q_2), \ldots, (p_d, q_d)$ & $d$ \\
					\hline
		\end{tabular}
		\caption{All elements of $U$ - Sub-case 3-1.}
		\label{tab:subcase3-1}
\end{table}
Note that as the element $(s,t)$ is counted twice, we have $|U| = k - \ell=a+b+c+d-1$. 
It is straightforward that if $L$ does not contain any element in the 'bad' set
\[
\begin{split}
L^* &= \{t_1,\ldots,t_{a-1}\} \cup \{r_1, \ldots, r_b\} \cup \{s_1,\ldots, s_c\}\\
&\quad\ \cup \{q_1,\ldots,q_d\},
\end{split}
\]  
then none of the sub-indices of the terms in the right-hand side of (\ref{eq:ckm}) belong to $U$, as desired. 
Note that here we do not have to exclude $t$, because we are about to choose $L$
as a subset of $[k]$ whereas we already assume that $t \notin [k]$ for this case.  
As $|L^*| \leq (a-1) + b + c + d = k - \ell$, the set $[k] \setminus L^*$
has cardinality at least $\ell$. Therefore we can choose a legitimate $L$ by taking an arbitrary subset of $\ell$ elements of $[k] \setminus L^*$.   \\

\nin\textbf{Sub-case 3-2:} $(s,s) \in U$. 
We list all elements of $U$ in Table~\ref{tab:subcase3-2}. 
Since $t > k$, note that there is no valid index of the form $(t,u)$, according
to the definition of $\I$ in (\ref{eq:I}). 
\begin{table}[h]
	\centering
		\begin{tabular}{|l|c|c|}
					\hline 
					& Elements & Count\\
			    Starts with $s$ & $(s,t_1=s), (s,t_2),\ldots,(s,t_a = t)$ & a\\
					\hline
					Ends with $s$ & $(r_1=s,s), (r_2,s), \ldots,(r_b,s)$ & b\\
					\hline
					Ends with $t$ & $(s_1=s, t), (s_2,t), \ldots, (s_c, t)$ & $c$\\
					\hline
					Not starts/ends with $s$,$t$ & $(p_1, q_1), (p_2, q_2), \ldots, (p_d, q_d)$ & $d$ \\
					\hline
		\end{tabular}
		\caption{All elements of $U$ - Sub-case 3-2.}
		\label{tab:subcase3-2}
\end{table}
Since the elements $(s,t)$ and $(s,s)$ are counted twice, we have $|U| = k - \ell=a+b+c+d-2$. 
The 'bad' set to be excluded is
\[
\begin{split}
L^* &= \{t_1,\ldots,t_{a-1}\} \cup \{r_2, \ldots, r_b\} \cup \{s_2,\ldots, s_c\}\\
&\quad\ \cup \{q_1,\ldots,q_d\}.
\end{split}
\]  
Note that here we do not have to exclude $t$, because we are about to choose $L$
as a subset of $[k]$ whereas we already assume that $t \notin [k]$ for this case.  
As $|L^*| \leq (a-1) + (b-1) + (c-1) + d < k - \ell$, the set $[k] \setminus L^*$
has cardinality larger than $\ell$. Therefore we can choose a legitimate $L$ by taking an arbitrary subset of $\ell$ elements of $[k] \setminus L^*$.     

Thus, in both cases we can always find an $\ell$-subset $L$ of $[k]$ such that
none of the sub-indices of the terms in the right-hand side of (\ref{eq:ckm}) belong to $U$.
We now show that (\ref{eq:ckm}) holds for $L$ chosen as above. 
Instead of $a_{i,j}(L)$, we simply write $a_{i,j}$ as there is no possible confusion. 
Using (\ref{eq:c}) and (\ref{eq:cej}) we have
\begin{equation} 
\label{eq:3}
\begin{split} 
& \quad \sum_{i \in L}a_{i,t}c^*_{(i,s)}+\sum_{i \in L}a_{i,s}c_{(i,t)}\\
&= \sum_{i \in L}a_{i,t}(c^{\bE}_{i,s}+c^{\bE}_{s,i})
+\sum_{i \in L}a_{i,s}(c^{\bE}_{i,t}+c^{\bE}_{t,i})\\
&= \sum_{i \in L}a_{i,t}c^{\bE}_{i,s}+\sum_{i \in L}a_{i,t}c^{\bE}_{s,i}+\sum_{i \in L}a_{i,s}c^{\bE}_{i,t}%\\
%& \quad 
+\sum_{i \in L}a_{i,s}c^{\bE}_{t,i}\\
&= c^{\bE}_{t,s}+\sum_{i \in L}a_{i,t}c^{\bE}_{s,i}+c^{\bE}_{s,t}+\sum_{i \in L}a_{i,s}c^{\bE}_{t,i}\\
&= c_{(s,t)} + \sum_{i \in L}a_{i,t}c^{\bE}_{s,i}
+\sum_{i \in L}a_{i,s}c^{\bE}_{t,i}\\
&= c_{(s,t)} + \sum_{i \in L}\sum_{j \in L} a_{i,t}a_{j,s}c^{\bE}_{j,i}%\\
%& \quad 
+\sum_{i \in L} \sum_{j \in L} a_{i,s}a_{j,t}c^{\bE}_{j,i}\\
&= c_{(s,t)} + \sum_{i \in L}\sum_{j \in L} a_{j,t}a_{i,s}c^{\bE}_{i,j}%\\
%&\quad 
+\sum_{i \in L} \sum_{j \in L} a_{j,s}a_{i,t}c^{\bE}_{i,j},
\end{split} 
\end{equation} 
where in the last transition, we swap the indices $i$ and $j$. 
We now calculate the remaining sum in the right-hand side of (\ref{eq:ckm}). 
We first split it into two sums and transform each individually.
The first sum is 
\begin{equation} 
\begin{split} 
\label{eq:sum_1}
& \quad \sum_{\genfrac{}{}{0pt}{}{i \leq j}{i,j \in L}} a_{i,t}a_{j,s} c_{(i,j)}\\
&= \sum_{\genfrac{}{}{0pt}{}{i = j}{i,j \in L}} a_{i,t}a_{j,s} c_{(i,j)}
+\sum_{\genfrac{}{}{0pt}{}{i < j}{i,j \in L}} a_{i,t}a_{j,s} c_{(i,j)}\\
&= \sum_{\genfrac{}{}{0pt}{}{i = j}{i,j \in L}} a_{i,t}a_{j,s} c^{\bE}_{i,j}
+\sum_{\genfrac{}{}{0pt}{}{i < j}{i,j \in L}} a_{i,t}a_{j,s} (c^{\bE}_{i,j} + c^{\bE}_{j,i})\\
&= \sum_{\genfrac{}{}{0pt}{}{i = j}{i,j \in L}} a_{i,t}a_{j,s} c^{\bE}_{i,j}
+\sum_{\genfrac{}{}{0pt}{}{i < j}{i,j \in L}} a_{i,t}a_{j,s} c^{\bE}_{i,j}%\\
%&\quad 
+ \sum_{\genfrac{}{}{0pt}{}{j < i}{i,j \in L}} a_{j,t}a_{i,s}c^{\bE}_{i,j},
\end{split} 
\end{equation} 
where in the last transition, we swap the indices $i$ and $j$. 
Similarly, the second sum is
\begin{equation} 
\begin{split} 
\label{eq:sum_2}
& \quad \sum_{\genfrac{}{}{0pt}{}{i \leq j}{i,j \in L}} a_{i,s}a_{j,t} c_{(i,j)}\\
&= \sum_{\genfrac{}{}{0pt}{}{i = j}{i,j \in L}} a_{i,s}a_{j,t} c^{\bE}_{i,j}
+\sum_{\genfrac{}{}{0pt}{}{i < j}{i,j \in L}} a_{i,s}a_{j,t} c^{\bE}_{i,j}%\\
%&\quad 
+ \sum_{\genfrac{}{}{0pt}{}{j < i}{i,j \in L}} a_{j,s}a_{i,t}c^{\bE}_{i,j}. 
\end{split} 
\end{equation} 
Combining (\ref{eq:sum_1}) and (\ref{eq:sum_2}) and grouping suitable
sums together we deduce 
\begin{equation} 
\label{eq:4}
\begin{split} 
& \quad \sum_{\genfrac{}{}{0pt}{}{i \leq j}{i,j \in L}} (a_{i,t}a_{j,s}+a_{i,s}a_{j,t})c_{(i,j)}\\
&= \sum_{i \in L}\sum_{j \in L} a_{j,t}a_{i,s}c^{\bE}_{i,j}%\\
%&\quad 
+\sum_{i \in L} \sum_{j \in L} a_{j,s}a_{i,t}c^{\bE}_{i,j}. 
\end{split} 
\end{equation} 
From (\ref{eq:3}) and (\ref{eq:4}) we derive (\ref{eq:ckm}), hence complete
the proof of Lemma~\ref{lem:case3}. 
\end{proof}
\vskip 10pt 

By Lemma~\ref{lem:case1}, Lemma~\ref{lem:case2}, and Lemma~\ref{lem:case3}, we show that for every codeword $\cee = (c^{\ee}_{(1,1)}, c^{\ee}_{(1,2)},
\ldots, c^{\ee}_{(k,d)})$ of $\ccee$, every $k-\ell$ coordinates
of $\cee$ can be written as linear combinations of the remaining $\M - (k - \ell)$ coordinates. Thus the code $\ccee$ has minimum distance at least $k - \ell + 1$.
As we already establish that this code has minimum distance at most $k - \ell + 1$,
the proof of Lemma~\ref{lem:2} follows.
\end{document}